\renewcommand{\P}{\mathbb{P}}
\newcommand{\R}{\mathbb{R}}
\newcommand{\E}{\operatorname{E}}
\renewcommand{\P}{\operatorname{P}}
\newtheorem{lm}{Lemma}
\newtheorem{prop}{Proposition}
\numberwithin{equation}{section}
\newtheorem{thm}{Theorem}[section]
\begin{document}

\begin{center}

{ \Large \bf An optimal investment strategy aimed at maximizing the expected utility across all intermediate capital levels}

\vspace{20pt}

{\bf J.~ Cerda-Hern\'andez}$\,^{a}$, {\bf A.~ Sikov}$\,^{a}$ and  {\bf A.~ Ramos}$\,^{b}$

\vspace{20pt}

{\footnotesize
$^a$~Econometric Modelling and Data Science Research Group, National University of Engineering, Lima-Peru.\\
$^b$~Department of Mathematics,
University of Tarapac\'a, Arica, Chile.\\
}

\vspace{30pt}

\end{center}

\begin{abstract}
This study investigates an optimal investment problem for an insurance company operating under the Cramer-Lundberg risk model, where investments are made in both a risky asset and a risk-free asset. In contrast to other literature that focuses on optimal investment and/or reinsurance strategies to maximize the expected utility of terminal wealth within a given time horizon, this work considers the expected value of utility accumulation across all intermediate capital levels of the insurer. By employing the Dynamic Programming Principle, we prove a verification theorem, in order to show that any solution to the Hamilton–Jacobi–Bellman  (HJB) equation solves our optimization problem. Subject to some regularity conditions on the solution of the HJB equation, we establish the existence of the optimal investment strategy.  Finally, to illustrate the applicability of the theoretical findings, we present  numerical examples.
\bigskip

{\small {\bfseries Keywords.} {Stochastic control $\cdot$  dynamic programming principle $\cdot$  Hamilton-Jacobi-Bellman equation $\cdot$  optimal investment $\cdot$  expected utility $\cdot$  risk process}}
\end{abstract}

\newpage
\section{Introduction}
The optimal investment problem has become an attractive research area in actuarial science, financial mathematics, and quantitative finance. For insurance companies, this problem involves determining the best way to allocate wealth among different asset classes, such as equity, debt and real estate, over a given period. The goal is to achieve this allocation but  also reducing risk, minimizing exposure to potential losses, and ensuring sufficient funds to meet future cash flow obligations.
\medskip

After the classical collective risk model introduced by  \cite{Lundberg1903}, the ruin probability of such a portfolio became a principal focus in this field, with various approaches. Nowadays, there are several additional methods for studying the wealth of an insurance portfolio and reducing or controlling the ruin probability. In stochastic control theory applied to insurance models, reinsurance or a combination of reinsurance with dividends and optimal investment are other possibilities for controlling certain risk measures of the insurance portfolio (see for instance  \cite{HipVog}, \cite{Asmusen_ruin2010}, \cite{SchmBook},   \cite{AzMuBook},  \cite{Diko}, \cite{Hipp2004},  \cite{Korn2008}, and references therein). 
\medskip

In this paper, we adopt the approach of maximizing the expected utility of an insurance portfolio with investment. In practical terms, most insurance contracts are inherently tied to financial markets, whether through mortality rates, interest rates, financial products, or direct connections to stocks or indices (see, for instance, \cite{Dhaene2013}, \cite{Artzner2023}, and \cite{Robben2022}). Therefore, it is essential to continually assess the risk and wealth of the insurance portfolio at all times $t$  in order to minimize exposure or risk. Additionally, numerous insurance companies calibrate their portfolios based on the severity of claims per policy or in response to an increase in the portfolio's risk, aiming to reduce fluctuations over the long term. To address these issues, unlike other literature that primarily focuses on optimal investment strategies to maximize the expected utility of terminal wealth within a specified time horizon (see, for instance, \cite{Cadenillas2014}, \cite{Badaoui2018}, \cite{Merton1969}), our proposal specifically considers the expected value of utility accumulation across all intermediate capital levels of the insurer.
\medskip

Following the ideas of  \cite{Cadenillas2014} and  \cite{Badaoui2018}, our model, described in Section~\ref{preliminaries}, proposes a new stochastic optimization problem. This problem differs from existing ones in the literature (see \cite{SchmBook} for a summary of stochastic control problems applied to insurance) by considering the expected value of utility accumulation across all intermediate capital levels of the insurer, because most insurance contracts are inherently tied to stocks or financial instruments, exposing the insurance portfolio to greater market risks \cite{Dhaene2013}, \cite{Artzner2023}.
\medskip

\cite{Ferguson1965} was the first to apply stochastic control theory to solve the problem of the expected utility of wealth for the investor in the discrete case. Subsequently, in his seminal paper, \cite{Merton1969} introduced the fundamental classical optimal investment-consumption model, laying the foundation for future continuous-time stochastic optimization problems.   Inspired by  \cite{Merton1969}, \cite{Browne1995} considered a risk process modeled by a Brownian motion with drift, incorporating the possibility of investment in a risky asset that follows a geometric Brownian motion, but without a risk-free interest rate. In that work, Browne verified the conjecture announced by Ferguson and, for the first time, established a relationship between minimizing the ruin probability and maximizing the exponential utility of terminal wealth. This connection provides a clear link between insurance and finance. Our work also establishes this relationship (see Lemmas \ref{prop:Vpi} and \ref{prop:Phipi}). Other related works that we can mention, where the problem of maximizing the expected utility of terminal wealth was studied, include  \cite{Zariphopoulou2001}, \cite{BF2013}. In several of these works, general properties were proven, along with an existence and uniqueness theorem, leading to a closed form of the optimal strategy.
\medskip

To solve our optimization problem, we explored the properties of the value function to derive the dynamic programming principle (DPP). Subsequently, we obtained the Hamilton-Jacobi-Bellman (HJB) equation. To find solutions to the HJB equation, we imposed additional conditions on the utility function and applied an appropriate boundary condition to the value function. Finally, we also demonstrate that in some cases, the Merton ratio serves as the optimal strategy, aligning with findings reported in \cite{Merton1969} and \cite{Browne1995}, among others.
\medskip

The rest of the paper is organized as follows. In Section \ref{preliminaries}, basic definitions are described and the optimization problem is formulated. Section \ref{value_function} focuses on studying the properties of the value function. In Section \ref{dynamic_programming}, the Dynamic Programming Principle for the problem is validated. In Section \ref{Hamilton-Jacobi},  it is demonstrated that  the value function is a solution to the HJB equation.  In Section \ref{numerical}, it is presented  numerical results in order to compare the behavior of the ruin probability with and without optimal investment, using exponential, Pareto, and Weibull claim size distributions. 

\section{Preliminaries and problem formulation}\label{preliminaries}

The well-known Cram\'er-Lundberg risk model, with application to insurance, is driven by equation
\begin{equation}\label{CL}
	X_t=x+ct - Q_t=x+ct - \sum_{i=1}^{N_t}U_i,\quad t\in[0,T]
\end{equation}
where $X_0=x$ is the initial surplus or the surplus known at a giving or starting instant, $X_t$ represents the dynamics of surplus of an insurance company up to time $t$, $c$ is the premium income per unit time, assumed deterministic and fixed, and $Q_t$ es the total  claim amount process. $\{U_i\}_{i\geq 1}$ is a sequence of i.i.d. random variables with common cumulative distribution  $F$, with $F(0)=0$. We assume the existence of $\mu= \mathbb{E}(U_1)$. $N_t=\max\{ k \geq 1:  T_k \leq t \}$ denote the number of claims occurring before or at a given time $t$, where the random variable $T_i$ denote the  arrival times. We assume that $\{N_t\}_{t\geq 0}$ is a Poisson process with Poisson intensity $\lambda$, independent of the sequence $\{U_i\}_{i\geq 1}$. The  process $Q_t= \sum_{i=1}^{N_t}U_i$ is then a compound Poisson process. We will assume particular distributions in some sections of this manuscript and we state it appropriate and clearly. An important condition for the model is the so called {\it income condition} or {\it net profit condition}, in the case positive loading condition: $c\E(T_1) > \E(U_1)$. It brings an economical sense to the model: it is expected that the income until the next claim is greater than the size of the next claim. If we denote $Y_1=T_1$ and $Y_i = T_i - T_{i-1}, i\geq 2$, then the net income between the $(i-1)$-th and  the $i$-th claims is $cY_i - U_i$
\medskip

Let the time to ruin of the Cram\'er-Lundberg process $X_t$  be denoted by $\tau=\inf \{t>0 : X_t <0 \}$. The ruin probability is defined as $\psi(x)=\P(\tau <\infty | X_0=x)$ and the corresponding  survival probability  is $\varphi(x)=1-\psi(x)$. 
\medskip

For $t\geq 0$ define $\mathcal{F}_t = \sigma\{X_s : 0\leq s\leq t \}$, the smallest $\sigma$-algebra making the family $\{X_t :  0\leq s\leq t \}$ measurable. $\{\mathcal{F}_t : t\geq 0\}$ is the filtration generated by $\{X_t\}$. Cramer-Lundberg  risk process $X_t$ has an independent increment property, i.e., for any $0\leq s \leq t$, the sub $\sigma$-algebra $\mathcal{F}_s$ and the random varaible $X_t-X_s$  are  independent.
\medskip

We shall assume that the insurance company  invests its surplus in a financial market described by the standard Black - Scholes model, i.e., on the market there is a riskless bond and risky assets satisfying the following SDEs 
\begin{equation}
	dS_t^0=rS_t^0dt\; ,  \qquad  dS_t = \mu S_tdt + \sigma S_tdB_t\;, \;\;\; t\in[0,T]
\end{equation}
respectively, where $r$ is the risk-free rate and $\mu,\sigma$ are the expected return and volatility of the stock market, and $dB_t$ is the increment of a standard Brownian motion. We assume that $\mu>r$.   At time $t$ the insurer must choose  what fraction of the surplus invest in a stock portfolio, $\pi_t$ (the remaining fraction $1-\pi_t$ being invested in the riskless bond),  i.e., $\pi_tX_t$ is the amount invested into the risky asset and  $(1-\pi_t)X_t$  is the amount invested into the riskless bond. We assume that the processes $S_t$ and $Q_t$ are independent, this means that the sub $\sigma$-algebras $\sigma\{S_t :  t\geq 0  \}$ and  $\sigma\{Q_t : t\geq 0 \}$ are independent. Denote by $X_t^{\pi}$ the process with  investment strategy $\pi=\{ \pi_t\}$ where $\pi_t\in[0,1]$. Given an investment strategy $\pi=\{ \pi_t\}$  it is easy to prove that the controlled risk process $X_t^{\pi}$  can be written as 
\begin{equation}\label{mertondyn}
	dX_t^{\pi}= \left[c + \mu\pi_tX_t^{\pi} + r(1-\pi_t)X_t^{\pi} \right] dt +  \sigma\pi_t X_t^{\pi}dW_t - dQ_t 
\end{equation}
The controlled risk process $X_t^{\pi}$  can be viewed  as the  wealth  of an risk averse economic agent (insurer) at time $t$ with cash injection to the fund due to the  income per unit time, assumed deterministic and constant. Note that, in this case, the wealth of the agent  have an stochastic dynamic with initial condition $X_0^{\pi}=x$. 
\medskip

We denote by  $\prod_{x}^{ad}$ the set of all the admissible investment strategies with initial value $x$. An investment strategy is {\it admissible} if the process $\{\pi_t\}$ is predictable with respect to the filtration $\mathcal{F}_t=\sigma\{X_t, S_t \}$.  We allow all adapted cadlag control processes $\pi=\{\pi_t\}\in \prod_{x}^{ad}$ such that  there is a unique solution $\{X_t^{\pi}\}$ to the stochastic differential equation (\ref{mertondyn}). 
\medskip

We define a stationary investment strategy as the one where the investment decision depends only on the current surplus, i.e., $\pi_t= \pi(X_t^\pi)$ is the fraction of the surplus invest in a stock portfolio when the current surplus is $X_t^\pi$. Thus the controlled investment  process $X_t^\pi$ should satisfy
\begin{equation}\label{mertondyn1}
	dX_t^{\pi}= [c + \mu \pi(X_{t^-}^{\pi}) + r( X_t^{\pi}   - \pi(X_{t^-}^{\pi})) ] dt +  \sigma \pi(X_{t^-}^{\pi})dW_t - dQ_t 
\end{equation}
If  $\pi: [0,\infty) \to \R$ is a Lipschitz continuous function,  using general results of diffusion processes is posible to prove that there is a unique strong solution to the SDE (\ref{mertondyn1}) (see \cite{Karatzas1991}, \cite{Watanabe1981}).  We further restrict to strategies such that  $X_t^{\pi}\geq 0$, i.e., the agent is not allowed to have debts. This means that if there is no money left the agent can no longer invest. We will further have to assume that our probability space $(\Omega,\mathcal{F},\mathbb{P})$ is chosen in such a way that for the optimal strategy found below a unique solution $\{ X_t^*\}$ exists.  Let us define the time to ruin $\tau^{\pi}=\inf\{ t : X_t^{\pi} < 0 \}$ and the ruin probability $\psi^{\pi}(x)=\mathbb{P}(\tau^{\pi}<\infty | X_0^\pi=x)$.   
\medskip

According to the utility theory, in a financial market where investors are facing uncertainty, an investor is not concerned with wealth maximization per se but with utility maximization. It is therefore possible to introduce an increasing and concave utility function $\phi(x,t)$ representing the expected utility of a risk averse investor. 
\medskip

We now describe our optimization problem. Given a strategy $\pi=\{\pi_t \}  \in \prod^{ad}$, we define the value of the strategy $\pi$ as follow
\begin{equation}
	V^{\pi}(x)= \mathbb{E}\left[ \int_0^{T\wedge\tau^\pi}  \phi(X_{s}^{\pi},s) ds \mid  X_0^{\pi}=x \right].
\end{equation}
Now, the goal of the problem is not anymore to maximize the expected portfolio value or minimize the ruin probability or maximizing the expectation of the present value of all dividends paid to the shareholders up to the ruin   but to maximize the expected utility stemming from the wealth during the contract $[0,T]$, where $T$ is the maturity date of the contract. If the  initial portfolio value is $X_0^{\pi}=x$, then our objective is to find the optimal strategy $\pi^*\in \prod_{x}^{ad}$ such that 
\begin{equation}\label{optimizationprob}
	V(x)= \sup_{\pi} V^{\pi}(x)= V^{\pi^*}(x)
\end{equation}
For simplicity of notation, we will omit $\pi$ of the stopping time $\tau^\pi$. We just make the convention that  $X_t^{\pi}=0$ for  $t\geq \tau$.
\medskip

Note that  this system have a two random components, the total claim amount $Q_t$ and the unit price $S_t$ of the risky asset, but the unique component that the  insurer can be to control is the fraction of wealth that is invested in the risky asset. 
\medskip

Because the agent prefer value growth  of the surplus $X_t^{\pi}$ (intuitively that property reduces ruin probability) the utility function $\phi(x,t)$   is assumed to be strictly increasing.  Because  the agent is risk-averse, the utility  $\phi(x,t)$  is assumed to be strictly concave. In other words, a monetary unit means less to the agent if $x$ is large than if $x$ is small. Finally, because the agent’s preferences do not change rapidly, we assume that $\phi(x,t)$   is continuous in $t$.  Note that as a concave function $\phi(x,t)$    is continuous in $x$. For simplicity of the notation we norm the utility functions such that  $\phi(0,t)=0$. To avoid some technical problems we suppose that $\phi(x,t)$  is continuously differentiable with respect to $x$ and that $\lim_{x\to \infty} \phi_x(x,t)=0$, where  $\phi_x(x,t)$  denotes the derivative with respect to $x$.

\section{Basic properties of the value function}\label{value_function}
In this section, we present some results that characterize the regularity of the value function $V$ of our optimization problem.
\begin{lm}\label{lemmaconcave}
	The function $V$ is strictly increasing and concave with boundary value $V(0)=0$, and hence continuous in the interior of the domain.
\end{lm}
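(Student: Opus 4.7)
The lemma bundles four claims: $V(0)=0$, strict monotonicity, concavity, and interior continuity. I would treat them in this order, exploiting two structural facts: the controlled SDE \eqref{mertondyn} becomes linear once re-parametrised in the amount invested $A_t^\pi:=\pi_tX_t^\pi$, and $\phi(\cdot,t)$ is non-negative, strictly increasing, strictly concave, with $\phi(0,t)=0$.

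The boundary value is a direct consequence of the absorbing convention at zero capital together with $\phi(0,t)=0$: at $x=0$ the agent has no capital to invest, and with $X_t^\pi=0$ post-ruin and $\phi(0,t)=0$, the integrand in $V^\pi$ vanishes identically, so $V^\pi(0)\equiv 0$. For strict monotonicity, fix $0\le x_1<x_2$ and an admissible $\pi^1$ for $x_1$, and build a strategy $\pi^{new}$ for $x_2$ that invests the same nominal amount $A_t=\pi^1_tX_t^{x_1,\pi^1}$ in the risky asset while parking the excess $x_2-x_1$ in the bond. Writing \eqref{mertondyn} in $(X,A)$-variables,
\[
dX_t^\pi=[c+rX_t^\pi+(\mu-r)A_t^\pi]\,dt+\sigma A_t^\pi\,dW_t-dQ_t,
\]
a direct verification shows $X_t^{\pi^{new}}=X_t^{x_1,\pi^1}+(x_2-x_1)e^{rt}$. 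Hence $\pi^{new}$ is admissible (the implied fraction $A_t/X_t^{\pi^{new}}$ lies in $[0,1]$), survives at least as long ($\tau^{\pi^{new}}\ge\tau^{x_1,\pi^1}$), and yields a strictly larger integrand by strict monotonicity of $\phi(\cdot,t)$. Taking expectations and passing to the supremum gives $V(x_2)>V(x_1)$.

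The heart of the argument is concavity. For $\alpha\in(0,1)$, $x=\alpha x_1+(1-\alpha)x_2$ and admissible $\pi^1,\pi^2$ for $x_1,x_2$, set $A_t^i=\pi^i_tX_t^{x_i,\pi^i}$, $A_t=\alpha A_t^1+(1-\alpha)A_t^2$ and $X_t=\alpha X_t^{x_1,\pi^1}+(1-\alpha)X_t^{x_2,\pi^2}$. Linearity of the $(X,A)$-form of the SDE implies that the pair $(X_t,A_t)$ solves it with initial value $x$, and $0\le A_t\le X_t$ makes $\pi_t:=A_t/X_t\in[0,1]$ an admissible fraction. Since $X_t\ge 0$ whenever both components are, the combined ruin time satisfies $\tau^\pi\ge\tau^{\pi^1}\wedge\tau^{\pi^2}$. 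Extending each integrand by zero past its own ruin time (consistent with the post-ruin convention $X=0$ and $\phi(0,t)=0$) and applying concavity of $\phi(\cdot,t)$ pointwise,
\[
\phi(X_t,t)\ge\alpha\,\phi(X_t^{x_1,\pi^1},t)+(1-\alpha)\,\phi(X_t^{x_2,\pi^2},t),
\]
integrating, taking expectations, and passing to the supremum over $\pi^1,\pi^2$ yields $V(x)\ge\alpha V(x_1)+(1-\alpha)V(x_2)$. Continuity on the interior is then automatic, since a finite concave function on an open subinterval of $\R$ is continuous there.

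The delicate point I expect to have to handle carefully is reconciling the three different ruin times $\tau^\pi$, $\tau^{\pi^1}$ and $\tau^{\pi^2}$ in the concavity step: the zero-extension convention past each ruin time, together with the fact that the combined process survives at least as long as the shorter of its two components, is what allows the pointwise concavity of $\phi$ to pass to the integral functionals $V^{\pi^i}$.
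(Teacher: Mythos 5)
Your proposal has the same architecture as the paper's proof --- $V(0)=0$ from the absorbing convention and $\phi(0,t)=0$, monotonicity from a pathwise comparison, concavity from running a convex combination of the two strategies and applying pointwise concavity of $\phi$, and continuity as an automatic consequence of finite concavity --- but your execution differs precisely at the two places where the paper is loosest, and in both cases your version is the sounder one. For monotonicity the paper simply runs the \emph{same} strategy from $y>x$ and invokes pathwise dominance, whereas you build an explicit coupling that invests the same nominal amount and parks the excess in the bond, giving $X^{\pi^{new}}_t=X^{x_1,\pi^1}_t+(x_2-x_1)e^{rt}$; both work, but yours makes admissibility of the comparison strategy (the implied fraction staying in $[0,1]$) transparent. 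For concavity the paper combines the \emph{fractions}, $\pi_t=\alpha\pi^1_t+(1-\alpha)\pi^2_t$, and asserts $X^{\pi}_t=\alpha X^{\pi^1}_t+(1-\alpha)X^{\pi^2}_t$, which is not literally true for the fraction-parametrised dynamics (\ref{mertondyn}) because of the product $\pi_t X_t^{\pi}$; your reparametrisation by the invested amount $A_t=\pi_tX_t^{\pi}$, for which the SDE is affine, is exactly what makes the linearity claim correct, and your check that $A_t/X_t\in[0,1]$ restores admissibility. You also state explicitly the zero-extension past the individual ruin times, which the paper passes over in silence when it identifies the integral up to $T\wedge\tau^{\pi}$ with the two value functionals $V^{\pi^1}(x)$ and $V^{\pi^2}(y)$.

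One caveat applies to your sketch and to the paper's proof alike. After the earlier of the two ruin times, say $\tau^{\pi^1}<\tau^{\pi^2}$, the frozen combination $\alpha X^{\pi^1}+(1-\alpha)X^{\pi^2}=(1-\alpha)X^{\pi^2}$ collects only the premium rate $(1-\alpha)c$ and suffers only the claims $(1-\alpha)\,dQ$, so on $\{\tau^{\pi^1}\neq\tau^{\pi^2}\}$ it is no longer the wealth process of \emph{any} admissible strategy started from $z$; the inequality $V(z)\ge \mathbb{E}\int\phi\bigl(\alpha X^{\pi^1}_s+(1-\alpha)X^{\pi^2}_s,s\bigr)ds$ therefore needs an additional argument beyond ``the combined process survives at least as long as the shorter component'' (for instance, stopping the comparison at $\tau^{\pi^1}\wedge\tau^{\pi^2}\wedge T$ and treating the remaining piece via the dynamic programming/monotonicity structure). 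Likewise, strictness of the monotonicity after passing to the supremum over $\pi^1$ requires a uniform positive gap, which neither you nor the paper exhibits. These are shared blemishes rather than defects specific to your route, but if you intend your write-up to be tighter than the paper's, those are the two points to close.
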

\begin{proof}
	Note that if $x=0$, any strategy with $\pi_t \neq 0$ would immediately lead to ruin. Without wealth the economic agent don't obtain any utility. Thus, $V(0)= \int_0^T\phi(0,s)ds = 0$. Let $\pi$ be a strategy for initial capital $x$ and let $y>x$. We denote the surplus process starting in $x$ by $X^\pi$ and the surplus process starting in $y$ by $Y^\pi$. Note that $X_t^\pi < Y_t^\pi$, for all $t\geq 0$. As  $\phi$ is strictly increasing, then  $\phi(X_t^\pi,t) < \phi(Y_t^\pi,t)$. Thus, we obtain $V(y)>V(x)$.
	
	Let $z=\alpha x +  (1-\alpha)y$ for $\alpha\in[0,1]$. Let $\pi^1$ the strategy  for initial value $x$ and $\pi^2$ the strategy for initial value $y$.  Consider the strategy $\pi_t = \alpha \pi_t^1  + (1-\alpha)\pi_t^2$. Then  $X_t^\pi = \alpha X_t^{\pi^1} + (1-\alpha) X_t^{\pi^2}$. The value of this new strategy now becomes
	$$
	\begin{array}{ccl}
		V(z) &\geq& V^\pi(z)=\E \left[ \displaystyle\int_0^{T\wedge \tau^{\pi}}\phi\left(\alpha X_s^{\pi^1} + (1-\alpha) X_s^{\pi^2}, s\right)ds \right] \vspace{0.2cm} \\
		&\geq& \E \left[ \displaystyle\int_0^{T\wedge \tau^\pi} \alpha  \phi\left(X_s^{\pi^1},s\right)  + (1-\alpha)\phi\left( X_s^{\pi^2}, s\right)ds \right]  \vspace{0.2cm} \\
		&=& \alpha V^{\pi^1}(x) + (1-\alpha)V^{\pi^2}(y) 
	\end{array}
	$$
	Taking the supremum on the right-hand side we obtain $V(z)\geq \alpha V(x) + (1-\alpha)V(y)$. 
\end{proof}

The following lemma establishes that for any strategy $\pi\in\prod_{x}^{ad}$,  such that  $\tau^{\pi}$  occurs between claim times, can never be optimal. 

\begin{lm}
	Suppose that  $\pi\in \prod_{x}^{ad}$ is such that  $\mathbb{P}(T_i\wedge T< \tau^\pi < T_{i+1}\wedge T) > 0$, for some $i\in\mathbb{N}$, where $T_i$'s are the jump times of the Poisson process $\{N_t : t\geq 0 \}$, then exist $\widetilde{\pi}\in\prod_{x}^{ad}$ such that  $\mathbb{P}\left( \tau^{\widetilde{\pi}} =  T_i \;\; \mbox{for some} \;  i \right)=1$, and  $V^{\widetilde{\pi}}(x) > V^{\pi}(x)$.
\end{lm}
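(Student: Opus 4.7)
The plan is to construct $\widetilde\pi$ by ``switching off'' the risky investment at the ruin time of $\pi$, and then to verify three things: admissibility of $\widetilde\pi$, the structural claim on $\tau^{\widetilde\pi}$, and the strict value improvement. The intuition is that at $X_t^\pi=0$ the dynamics (\ref{mertondyn}) reduce to $dX_t^\pi=c\,dt-dQ_t$, since both the diffusion $\sigma\pi_tX_t^\pi$ and the risky excess drift $(\mu-r)\pi_tX_t^\pi$ vanish. With a zero risky allocation, between jumps the wealth solves the ODE $dX_t=(c+rX_t)\,dt$, whose trajectories starting from any non-negative value are strictly positive thereafter (since $c>0$). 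Hence a continuous crossing of zero strictly between jumps must be driven by the diffusion and can be eliminated by freezing $\widetilde\pi_t=0$ from $\tau^\pi$ onwards. Concretely, I would define
$$
\widetilde\pi_t := \pi_t\,\mathbf{1}_{\{t<\tau^\pi\}},\qquad t\ge 0.
$$
Since $\tau^\pi$ is an $\mathcal F_t$-stopping time and $\pi$ is predictable, so is $\widetilde\pi$; it is bounded and yields a unique strong solution of (\ref{mertondyn}), hence $\widetilde\pi\in\prod_x^{ad}$.

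The second step is to prove the structural claim $\mathbb{P}(\tau^{\widetilde\pi}=T_i\text{ for some }i)=1$, read as ``ruin, if it occurs at all, occurs at a claim time''. By construction $X_t^{\widetilde\pi}=X_t^\pi$ on $[0,\tau^\pi)$; on the event $B_i=\{T_i\wedge T<\tau^\pi<T_{i+1}\wedge T\}$, continuity of the surplus between jumps forces $X_{\tau^\pi}^{\widetilde\pi}=0$. For $t\ge\tau^\pi$, since $\widetilde\pi_t\equiv 0$, the wealth obeys $dX_t=(c+rX_t)\,dt-dQ_t$, which between any two consecutive jump times is strictly increasing from its previous post-jump value; consequently $X^{\widetilde\pi}$ can only drop at some $T_j$, $j\ge i+1$, and any subsequent passage below zero must again occur at a claim time. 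On the complement of $B=\bigcup_iB_i$, either $\tau^\pi\in\{T_j\}_{j\ge 1}$ or $\tau^\pi\geq T$; in both cases $\widetilde\pi$ and $\pi$ produce identical trajectories on $[0,T\wedge\tau^\pi]$ and the conclusion is immediate, with $\tau^{\widetilde\pi}=\tau^\pi$.

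For the value comparison I would split on $B$. On $B^c$ the two integrands in $V^{\widetilde\pi}$ and $V^\pi$ coincide path-by-path. On $B$ they agree on $[0,\tau^\pi)$, but $\widetilde\pi$ additionally contributes $\int_{\tau^\pi}^{T\wedge\tau^{\widetilde\pi}}\phi(X_s^{\widetilde\pi},s)\,ds$, whose integrand is strictly positive on the sub-interval $(\tau^\pi,T_{i+1}\wedge T)$ because $X^{\widetilde\pi}_s>0$ there and $\phi$ is strictly increasing in $x$ with $\phi(0,\cdot)=0$. This sub-interval has positive length by the very definition of $B_i$, so taking expectations and using $\mathbb{P}(B)>0$ gives $V^{\widetilde\pi}(x)-V^\pi(x)>0$.

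The delicate point I anticipate is the pathwise analysis at $\tau^\pi$ on $B$: one must justify that the surplus hits zero continuously there (which the hypothesis $\tau^\pi<T_{i+1}\wedge T$ is precisely what rules out a jump), that freezing $\widetilde\pi=0$ yields a strictly positive surplus on $(\tau^\pi,T_{i+1}\wedge T)$, and that the resulting gain indeed passes to the expectation. All three follow from the SDE structure together with $c>0$ and the strict monotonicity of $\phi$, but stitching them into a clean argument — and carefully distinguishing the measurable events $B_i$, $B_i^c$ and their union — is where most of the work sits.
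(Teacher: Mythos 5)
Your proposal is correct and follows essentially the same route as the paper: the same modified strategy $\widetilde{\pi}_t=\pi_t 1_{\{t<\tau^{\pi}\}}$, the same observation that with zero risky allocation the dynamics $dX_t=(c+rX_t)\,dt-dQ_t$ are increasing between claim arrivals so ruin can only occur at a jump time, and the same value decomposition in which the extra integral $\int_{\tau^{\pi}}^{T\wedge\tau^{\widetilde{\pi}}}\phi(X_s^{\widetilde{\pi}},s)\,ds$ is strictly positive on the positive-probability event $\{T_i\wedge T<\tau^{\pi}<T_{i+1}\wedge T\}$. If anything, your treatment of admissibility and of the complement event is slightly more explicit than the paper's.
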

\begin{proof} Firts note that on the set  $\{ T_i\wedge T< \tau^\pi < T_{i+1}\wedge T \}$, one must have that  $X_{\tau^{\pi -} }^\pi = X_{\tau^{\pi} }^\pi = 0$. Now,  define the strategy  $\widetilde{\pi}_t = \pi_t 1_{\{t<\tau^\pi \} }$ and denote  $\widetilde{X}=X^{\widetilde{\pi}}$.  Then, it is clearly that  $\widetilde{X}_t = X_t^{\pi}$ for all $t\in [0,\tau^\pi]$,  $\mathbb{P}$-a.s. Consequently, $\widetilde{X}$ satisfies the following  stochastic diferential equation
	\begin{equation}
		d\widetilde{X}_t = \left\{
		\begin{array}{lcl}
			dX_t^\pi &,& t\in[0,\tau^\pi] \vspace{0.2cm}\\
			dX_t=(c+rX_t)dt - dQ_t &,& t> \tau^\pi
		\end{array}\right.
	\end{equation}
	Note that the process  $dX_t=(c+rX_t)dt - dQ_t$ in the absence of claims (or  between the jumps of $N_t$) is increasing, and  the ruin can occur only at some time $T_i$. Therefore the $\tau^{\widetilde{\pi}} =  T_k$  for some  $k>i$. Thus
	$$
	\begin{array}{rcl}
		V^{\widetilde{\pi}}(x) &=& \mathbb{E}\left[ \displaystyle\int_0^{T\wedge\tau^{\widetilde{\pi}}}  \phi(X_{s}^{\widetilde{\pi}},s) ds |  X_0^{\widetilde{\pi}}=x \right]     \vspace{0.2cm}\\
		&= &   \mathbb{E}\left[ \displaystyle\int_0^{T\wedge\tau^{\pi} }  \phi(X_{s}^{\widetilde{\pi}},s) ds |  X_0^{\widetilde{\pi}}=x \right]   +    \mathbb{E}\left[ \displaystyle\int_{T\wedge\tau^{\pi} }^{T\wedge\tau^{\widetilde{\pi}}}  \phi(X_{s}^{\widetilde{\pi}},s) ds |  X_0^{\widetilde{\pi}}=x \right]    \vspace{0.2cm}\\ 
		&=&   V^{\pi}(x)   +    \mathbb{E}\left[ \displaystyle\int_{\tau^{\pi} }^{T\wedge\tau^{\widetilde{\pi}}}  \phi(X_{s}^{\widetilde{\pi}},s) ds |  X_0^{\widetilde{\pi}}=x \right]    \vspace{0.2cm}\\ 
		& \geq & V^{\pi}(x)   +    \mathbb{E}\left[  \displaystyle\int_{\tau^{\pi} }^{T_{i+1}\wedge T } 1_{ \{ T_i\wedge T< \tau^\pi < T_{i+1}\wedge T \} }  \phi(X_{s}^{\widetilde{\pi}},s) ds    |  X_0^{\widetilde{\pi}}=x \right]    \vspace{0.2cm}\\ 
		&>&V^{\pi}(x) 
	\end{array}
	$$
	since $\mathbb{P}(T_i\wedge T< \tau^\pi < T_{i+1}\wedge T) > 0$.  This prove the lemma.
\end{proof}

Using the before lemma we obtain the following result about the ruin probability. 

\begin{lm}\label{prop:Phipi}
	Suppose that  $\pi\in \prod_{x}^{ad}$ is such that  $\mathbb{P}\left[ T_i\wedge T< \tau^\pi < T_{i+1}\wedge T\right] > 0$, for some $i\in\mathbb{N}$, where $T_i$'s are the jump times of the Poisson process $\{N_t : t\geq 0 \}$.  Then, exists $\widetilde{\pi}\in\prod^{ad}$ such that  $\mathbb{P}_{x}\left( \tau^{\widetilde{\pi}} =  T_i \;\; \mbox{for some} \;  i \right)=1$, and  
	$$\psi^{\widetilde{\pi}}(x)  <  \psi^{\pi}(x)$$
\end{lm}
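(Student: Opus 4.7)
The plan is to reuse the construction of the previous lemma and then translate it from ``value'' to ``ruin probability''. I would define $\widetilde{\pi}_t = \pi_t \, 1_{\{t<\tau^\pi\}}$, so that $\widetilde{X}_t = X_t^\pi$ for $t \in [0,\tau^\pi]$ and, after $\tau^\pi$, the process obeys $d\widetilde{X}_t = (c + r\widetilde{X}_t)dt - dQ_t$. Between consecutive jumps of $N$, this latter dynamics is strictly increasing (since $c>0$), so $\widetilde{X}$ can only drop strictly below zero at a jump time of $N$. This observation alone delivers the announced property $\mathbb{P}_x(\tau^{\widetilde{\pi}}=T_j \text{ for some } j)=1$.

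For the ruin-probability comparison, I would first record the easy inclusion: on $\{\tau^\pi=\infty\}$ the two strategies coincide, so $\{\tau^\pi=\infty\}\subseteq\{\tau^{\widetilde{\pi}}=\infty\}$ and $\psi^{\widetilde{\pi}}(x)\leq \psi^\pi(x)$. The strict inequality must then be extracted from the hypothesis set $A_i := \{T_i\wedge T < \tau^\pi < T_{i+1}\wedge T\}$, which has strictly positive probability. On $A_i$, the identity $X^{\pi}_{\tau^{\pi-}}=X^{\pi}_{\tau^\pi}=0$ established in the preceding lemma gives $\widetilde{X}_{\tau^\pi} = 0$, so the mass that $A_i$ contributes to $\psi^\pi$ may be at least partially redirected to survival under $\widetilde{\pi}$.

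To quantify this redirection I would apply the strong Markov property at the stopping time $\tau^\pi$: conditional on $\mathcal{F}_{\tau^\pi}$ and on $A_i$, the post-$\tau^\pi$ trajectory of $\widetilde{X}$ is a classical Cram\'er--Lundberg process with interest rate $r$ started from zero. Under the net profit condition of Section~\ref{preliminaries}, this process has strictly positive survival probability, so $\mathbb{P}(A_i \cap \{\tau^{\widetilde{\pi}}=\infty\}) > 0$; combined with the earlier inclusion this yields $\psi^{\widetilde{\pi}}(x) < \psi^\pi(x)$.

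The main technical obstacle is the careful use of the strong Markov property at $\tau^\pi$: since $\tau^\pi$ is measurable with respect to the joint filtration generated by $B_t$ and $Q_t$, one must verify that the remaining jump times and claim sizes are independent of $\mathcal{F}_{\tau^\pi}$. This follows from the strong Markov property of the Poisson process together with the i.i.d.\ structure of $\{U_i\}$ and its independence of $B$, so the argument is standard but should be written out explicitly.
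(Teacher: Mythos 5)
Your proposal is correct and follows essentially the same route as the paper: the same modified strategy $\widetilde{\pi}_t=\pi_t 1_{\{t<\tau^\pi\}}$, the same observation that the post-$\tau^\pi$ dynamics $(c+r\widetilde{X}_t)dt-dQ_t$ only allows ruin at claim times, and the same appeal to the strong Markov property at $\tau^\pi$ plus the net profit condition to give the restarted-from-zero process positive survival probability. The only difference is presentational: you bound $\psi^\pi(x)-\psi^{\widetilde{\pi}}(x)$ below by $\mathbb{P}\bigl(A_i\cap\{\tau^{\widetilde{\pi}}=\infty\}\bigr)>0$ via the inclusion $\{\tau^{\widetilde{\pi}}<\infty\}\subseteq\{\tau^\pi<\infty\}$, whereas the paper factorizes $\psi^{\widetilde{\pi}}(x)=\psi^\pi(x)\,\mathbb{P}_x[\theta<\infty\mid\tau^\pi<\infty]$ and shows the conditional factor is strictly less than one.
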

\begin{proof} On the event  $\{ T_i\wedge T< \tau^\pi < T_{i+1}\wedge T \}$,  with the notation as before lemma,  we have that  $X_{\tau^{\pi -} }^\pi = X_{\tau^{\pi} }^\pi = 0$. Now,  define the strategy  $\widetilde{\pi}_t = \pi_t 1_{\{t<\tau^\pi \} }$ and define  by  $\widetilde{X}$ the risk process  with investment, i.e., $\widetilde{X}:=X^{\widetilde{\pi}}$.  Then, it is clearly that  $\widetilde{X}_t = X_t^{\pi}$ for all $t\in [0,\tau^\pi]$,  $\mathbb{P}$-a.s.  Consequently, $\widetilde{X}$ satisfies the following  stochastic diferential equation
	\begin{equation}
		d\widetilde{X}_t = \left\{
		\begin{array}{lcl}
			dX_t^\pi &,& t\in[0,\tau^\pi] \vspace{0.2cm}\\
			dX_t=(c+rX_t)dt - dQ_t &,& t> \tau^\pi
		\end{array}\right.
	\end{equation}
	Note that the process  $dX_t=(c+rX_t)dt - dQ_t$ in the absence of claims  is increasing, and  the ruin can occur only at some time $T_i$. Therefore the $\tau^{\widetilde{\pi}} =  T_k$  for some  $k>i$.  Thus, 
	$$\tau^{\widetilde{\pi}} = \tau^{\pi} + \theta$$
	where $\theta$ is the ruin time of the process  $\{ Y_t=\widetilde{X}_{t + \tau^{\pi}} : t\geq 0 \}$, with  $Y_0=0$. Since  $\mathbb{P}(T_i\wedge T< \tau^\pi < T_{i+1}\wedge T) > 0$, we have that $\mathbb{P}(\theta > 0)>0$. Thus,
	$$
	\begin{array}{rcl}
		\psi^{\widetilde{\pi}}(x)  &=&  \mathbb{P}\left[  \tau^{\widetilde{\pi}} < \infty |  X_0^{\widetilde{\pi}}=x  \right] = \mathbb{P} \left[  \tau^{\pi} < \infty \; \wedge  \;  \theta  < \infty |  X_0^{\pi}=x  \right]   \vspace{0.2cm}\\
		&=&  \mathbb{P} \left[  \tau^{\pi} < \infty|  X_0^{\pi}=x  \right]  \mathbb{P}_x \left[   \theta  < \infty |  \tau^{\pi} < \infty   \right]   \vspace{0.2cm}\\
		&=&  \psi^{\pi}(x) \cdot   \mathbb{P}_x \left[   \theta  < \infty |  \tau^{\pi} < \infty   \right]   \vspace{0.2cm}\\
	\end{array}
	$$
	Remember that the initial surplus of  the process $\{Y_t : t\geq 0 \}$ is  $Y_0=0$, and the fraction of the surplus invested  in the risky  asset and the  riskless bond is zero and one, respectively, i.e., $\pi_t=0$ for all $t\geq 0$. Denote by $\psi^{f}(x)$ the ruin probability of the process $Y_t$. It is clear that $\psi^{f}(x) < \psi (x)$, where  $\psi (x)$ is the classical Cram\'er-Lundberg risk process with the same arrival times, starting  at $\tau^\pi$, and the same claim size. Since we assume  the net profit condition to the process without  investment, we obtain  $\psi (x)<1$. Therefore, we have $ \mathbb{P}_x \left[   \theta  < \infty |  \tau^{\pi} < \infty   \right]<1$,  proving the  lemma.
\end{proof}

\begin{lm}\label{prop:Vpi}
	For any $\varepsilon >0$, there exist $\delta>0$ such that  for any  strategy $\pi\in \prod_{x}^{ad}$ and $h>0$ with $0<h<\delta$, we can find  $\hat{\pi}^h\in \prod_{x}^{ad}$ such that
	\begin{equation}\label{ucpi}
		V^{\pi}(x) - V^{\hat{\pi}^h}(x-h) < \varepsilon
	\end{equation}	
\end{lm}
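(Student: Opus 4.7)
The plan is to exploit the concavity of the value function $V$ just established in Lemma \ref{lemmaconcave}, together with the fact that $V(x-h)$ is itself a supremum of $V^{\pi'}(x-h)$ over admissible strategies. Combining these two facts makes the choice of $\hat\pi^h$ almost automatic and, in particular, sidesteps any pathwise perturbation of the given strategy $\pi$.

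First I would fix the initial capital $x>0$ and $\varepsilon>0$. By Lemma \ref{lemmaconcave} the function $V$ is concave on $[0,\infty)$ with $V(0)=0$, and hence continuous on $(0,\infty)$. I would therefore choose $\delta=\delta(x,\varepsilon)\in(0,x)$ small enough that
\[
V(x)-V(x-h)<\varepsilon/2\qquad\text{for all }h\in(0,\delta).
\]
Crucially, $\delta$ depends only on $x$ and $\varepsilon$, not on any particular strategy $\pi$.

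Next, for each $h\in(0,\delta)$ I would invoke the definition $V(x-h)=\sup_{\pi'\in\prod_{x-h}^{ad}}V^{\pi'}(x-h)$ and select, by the property of a supremum, an admissible strategy $\hat\pi^h\in\prod_{x-h}^{ad}$ with
\[
V^{\hat\pi^h}(x-h)>V(x-h)-\varepsilon/2.
\]
For an arbitrary $\pi\in\prod_x^{ad}$, the trivial bound $V^\pi(x)\le V(x)$ then combines with the two previous inequalities to yield
\[
V^\pi(x)-V^{\hat\pi^h}(x-h)\le V(x)-V^{\hat\pi^h}(x-h)<\bigl(V(x-h)+\varepsilon/2\bigr)-\bigl(V(x-h)-\varepsilon/2\bigr)=\varepsilon,
\]
which is \eqref{ucpi} uniformly over $\pi\in\prod_x^{ad}$.

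The only real concern is making sure $V(x)$ is finite so that the continuity step is nonvacuous; this is a mild regularity point that is implicit in the formulation of the optimization problem. The interpretive remark is that, because the argument funnels through the sup, the near-optimal strategy $\hat\pi^h$ need not be built from $\pi$ at all — it depends only on $h$, $x$, and $\varepsilon$. One could instead try to construct $\hat\pi^h$ by, say, using the same fractional allocation as $\pi$ on the smaller initial capital $x-h$ and then comparing the two solutions of \eqref{mertondyn} pathwise; this would require controlling both the $L^1$ difference of $\phi(X^\pi_s,s)$ and $\phi(\hat X_s,s)$ and the discrepancy between the ruin times $\tau^\pi$ and $\tau^{\hat\pi^h}$, which is a substantially heavier argument that the slick sup-based proof avoids entirely.
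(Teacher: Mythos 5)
Your argument is correct as a proof of the statement as written, but it is a genuinely different route from the paper's. You never touch the given strategy $\pi$: you bound $V^{\pi}(x)\le V(x)$, use the continuity of $V$ at the fixed $x$ (available from Lemma \ref{lemmaconcave}, since an increasing concave function is continuous on the interior) to choose $\delta$ with $V(x)-V(x-h)<\varepsilon/2$, and then take $\hat{\pi}^h$ as an $\varepsilon/2$-optimizer in the supremum defining $V(x-h)$; the chain $V^{\pi}(x)-V^{\hat{\pi}^h}(x-h)\le V(x)-V(x-h)+\varepsilon/2<\varepsilon$ closes the proof, uniformly in $\pi$, exactly matching the quantifier order of the lemma (with $\delta=\delta(x,\varepsilon)$, which the statement permits since $x$ is fixed, and with $\hat{\pi}^h$ read as admissible for initial capital $x-h$ --- the paper's own notation $\prod_{x}^{ad}$ is loose on this point). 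The paper instead builds $\hat{\pi}^h$ from $\pi$ by a pathwise perturbation: invest nothing on $[0,h]$ (so that on $\{T_1>h\}$, which has probability $e^{-\lambda h}$, the surplus grows deterministically), then resume $\pi$; it splits the error as $J_1=V^{\pi}(x)-V^{\hat{\pi}^h}(x)$ and $J_2=V^{\hat{\pi}^h}(x)-V^{\hat{\pi}^h}(x-h)$, controls $J_1$ through the factor $1-e^{-\lambda h}\le C_1 h$ and a continuity-in-$t$ argument, and $J_2$ through an asserted uniform-continuity property. What the paper's construction buys is an explicit approximating strategy inherited from $\pi$ (potentially useful if one later needs the approximant to retain features of $\pi$, or a $\delta$ uniform in $x$ via uniform continuity); what your argument buys is economy and rigor --- in particular it only invokes continuity of the value function $V$ itself, whereas the paper's estimate of $J_2$ applies that continuity to the fixed-strategy value $V^{\hat{\pi}^h}$, a step your route avoids entirely. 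Your two caveats (finiteness of $V$, needed to extract an $\varepsilon/2$-optimal strategy and to subtract values) are legitimate and are assumptions the paper itself uses repeatedly ($\lim_{x\to\infty}V(x)<\infty$), so they do not constitute a gap.
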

\begin{proof}
	In order to solve the problem, we introduce the value function
	\begin{equation}
		V^{\pi}(t,x)= \mathbb{E}\left[ \int_t^{\tau^\pi\wedge T
		}  \phi(X_{s}^{\pi},s) ds \mid  X_t^{\pi}=x \right].
	\end{equation}
	The value function becomes  $V(x,t)= \sup_{\pi} V^\pi (x,t)$. The function we are looking for is  $V(x)=V(x,0)$.  Now, we consider the following modified strategy 
	\begin{equation}
		\hat{\pi}_t =  1_{\{T_1 >h \} } 1_{[h, \tau^\pi\wedge T] }(t) \pi_t
	\end{equation}
	where  $T_1$  is the first interarrival time. If  $t<h$ and on the set $\{T_1 > h \}$ we have that   $\hat{\pi}_t \equiv 0$ and  the solution of   $d X_t^{\hat{\pi}} = (c + rX_t^{\hat{\pi}})dt$ will be continuous and increasing for $t\in [0,h]$, so that 
	$$
	\begin{array}{ccl}
		X_t^{\hat{\pi}} &=& xe^{rt}  + \dfrac{c}{r} \left( e^{rt} -1 \right), \; t\in [0,h] \;\;\; \mathbb{P}-\mbox{a.s.  on}\;  \{T_1 > h \}
	\end{array}
	$$
	For fixed $\varepsilon >0$, we have
	$$V^{\pi}(x) - V^{\hat{\pi}^h}(x-h)= \left( V^{\pi}(x) - V^{\hat{\pi}^h}(x) \right) + \left( V^{\hat{\pi}^h}(x) - V^{\hat{\pi}^h}(x-h) \right):= J_1 + J_2$$
	We shall estimate $J_i$'s  separately. First,  since $V(t,x)$ is increasing, concave and continuous in $[0,\infty )$ with $\lim_{x\to\infty}V(t,x)<\infty$, we have that $V(t,x)$ is uniformly continuous on the variables $(t,x)$, for all  strategy  $\pi \in \prod^{ad}$. Thus, there exists $\delta_1$ such that, for $h< \delta_1$, then  $J_2 < \varepsilon/2$. 
	
	Now, to estimate  $J_1$,  we use the follow inequality
	$$
	\begin{array}{ccl}
		V^{\hat{\pi}}(0,x) &=&  \mathbb{E}\left[ \displaystyle\int_0^{\tau^{\hat{\pi}} \wedge T
		}  \phi(X_{s}^{\hat{\pi}},s) ds \mid  X_0^{\hat{\pi}}=x \right]   \vspace{0.2cm}  \\
		&\geq &  \mathbb{E}_x\left[ \displaystyle\int_0^{\tau^{\hat{\pi}} \wedge T
		}  \phi(X_{s}^{\hat{\pi}},s) ds \mid  T_1 > h \right] \mathbb{P}(T_1 > h)   \vspace{0.2cm} \\
		&\geq & e^{-\lambda h} \mathbb{E}_x\left[ \displaystyle\int_0^{\tau^{\hat{\pi}} \wedge T
		}  \phi(X_{s}^{\hat{\pi}},s) ds \mid  T_1 > h \right]   \vspace{0.2cm} \\
		&\geq &  e^{-\lambda h} \mathbb{E}_x V^{\pi}(h,x)
	\end{array}
	$$
	Thus, using the  previous argument and  since $V$ is increasing, concave and continuous in $[0,\infty )$ with $\lim_{x\to\infty}V(x)<\infty$, we have that
	\begin{equation}
		V^{\pi}(h,x)  - V^{\hat{\pi}}(x)  \leq  \left( 1-e^{-\lambda h} \right) V^{\pi}(h,x) \leq C_1h  \\
	\end{equation}
	In addition, it is clear that $V(t,x)$ is uniformly continuous on the variables $(t,x)$  with  $\lim_{h\downarrow 0}[V^{\pi}(h,x) - V^{\pi}(0,x)]=0$, for all  strategy  $\pi \in \prod^{ad}$. Therefore, there exists $\delta_2$ such that $J_1< \varepsilon/2$ for   $h<\delta_2$. Taking  $\delta = \min\{\delta_1 , \delta_2 \}$  we prove (\ref{ucpi}), whence the lemma.
\end{proof}

\section{Dynamic programming principle}\label{dynamic_programming}

In this section, we show the  Dynamic Programming Principle for  our optimization problem. 

\begin{thm}\label{DPP}
	For any initial capital $x$ and stopping time $\tau\in[0, T]$,  the value function satisfies
	\begin{equation}\label{dpptheo}
		V(x)= \sup_{\pi\in \prod^{ad}} \mathbb{E}_x \left[  \int_{0}^{\tau\wedge \tau^\pi} \phi( X_s^\pi, s ) ds  + V(X^{\pi}_{\tau\wedge \tau^\pi}) \right]
	\end{equation}
\end{thm}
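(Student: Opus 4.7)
The plan is to establish the two inequalities $V(x)\le\mathrm{RHS}$ and $V(x)\ge\mathrm{RHS}$ separately, where RHS denotes the right-hand side of (\ref{dpptheo}). Both directions rely on the strong Markov property of the controlled pair $(X^\pi,t)$ at the stopping time $\tau\wedge\tau^\pi$, which is justified by the independent-increment structure of the compound Poisson and Brownian drivers combined with the predictability of admissible strategies.

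For the easy direction, I would fix $\pi\in\prod_x^{ad}$ and split
\begin{equation*}
V^\pi(x)=\E_x\Bigl[\int_0^{\tau\wedge\tau^\pi}\phi(X_s^\pi,s)\,ds\Bigr]+\E_x\Bigl[\int_{\tau\wedge\tau^\pi}^{T\wedge\tau^\pi}\phi(X_s^\pi,s)\,ds\Bigr].
\end{equation*}
Conditioning the second term on $\mathcal{F}_{\tau\wedge\tau^\pi}$ and applying the strong Markov property, the inner expectation equals $V^{\pi'}(X^\pi_{\tau\wedge\tau^\pi},\tau\wedge\tau^\pi)$ for the shifted control $\pi'_s:=\pi_{s+\tau\wedge\tau^\pi}$, which is bounded above by $V(X^\pi_{\tau\wedge\tau^\pi},\tau\wedge\tau^\pi)$ and then by $V(X^\pi_{\tau\wedge\tau^\pi})$ using that $\phi\ge 0$ (forcing $V(x,\cdot)$ to be non-increasing in its time argument). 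Taking $\sup_\pi$ yields $V(x)\le\mathrm{RHS}$.

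For the reverse inequality, fix $\pi$ and $\varepsilon>0$; the idea is to glue $\pi$ on $[0,\tau\wedge\tau^\pi]$ with a state-dependent $\varepsilon$-optimal continuation. Choose $M$ large enough that $\Pr_x(X^\pi_{\tau\wedge\tau^\pi}>M)$ contributes at most $\varepsilon$ to the RHS, and use the continuity of $V$ from Lemma \ref{lemmaconcave} together with the uniform-continuity-type estimate in Lemma \ref{prop:Vpi} to partition $[0,M]$ into intervals $I_k=[y_k,y_{k+1})$ on which $V$ oscillates by less than $\varepsilon$. For each $k$ pick $\pi^{(k)}\in\prod_{y_k}^{ad}$ with $V^{\pi^{(k)}}(y_k)\ge V(y_k)-\varepsilon$, and define
\begin{equation*}
\tilde\pi_s:=\pi_s\,1_{\{s\le\tau\wedge\tau^\pi\}}+\sum_k \pi^{(k)}_{\,s-\tau\wedge\tau^\pi}\,1_{\{X^\pi_{\tau\wedge\tau^\pi}\in I_k\}}\,1_{\{s>\tau\wedge\tau^\pi\}}.
\end{equation*}
A second application of the strong Markov property, on the events $\{X^\pi_{\tau\wedge\tau^\pi}\in I_k\}$, then yields $V^{\tilde\pi}(x)\ge\E_x\bigl[\int_0^{\tau\wedge\tau^\pi}\phi(X_s^\pi,s)\,ds+V(X^\pi_{\tau\wedge\tau^\pi})\bigr]-C\varepsilon$; letting $\varepsilon\downarrow 0$ and $M\uparrow\infty$ completes the proof.

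The main obstacle lies in the lower bound: one must verify that the concatenated $\tilde\pi$ is genuinely predictable with respect to $\mathcal{F}_t$ (so that admissibility is preserved) and that the controlled SDE (\ref{mertondyn}) really does restart after $\tau\wedge\tau^\pi$ as if freshly initialized at $X^\pi_{\tau\wedge\tau^\pi}$. The continuity of $V$ supplied by Lemmas \ref{lemmaconcave} and \ref{prop:Vpi} is what permits replacing the ideal $\varepsilon$-optimal strategy at the exact post-switch capital $X^\pi_{\tau\wedge\tau^\pi}$ by the discretized $\pi^{(k)}$ without losing more than $O(\varepsilon)$ in value; verifying this error estimate uniformly across the partition is the delicate step.
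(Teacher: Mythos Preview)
Your proposal lines up with the paper's argument in its essentials: the upper bound by splitting the integral at $\tau\wedge\tau^\pi$ and conditioning on $\mathcal{F}_{\tau\wedge\tau^\pi}$, and the lower bound by partitioning the state axis into intervals, choosing an $\varepsilon$-optimal strategy on each, and concatenating. Both you and the paper appeal to the concavity/continuity of $V$ from Lemma~\ref{lemmaconcave} (and the uniform estimate of Lemma~\ref{prop:Vpi}) to control the discretization error.

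The one substantive structural difference concerns how general stopping times are handled. You work directly with an arbitrary stopping time $\tau$ throughout, invoking the strong Markov property at $\tau\wedge\tau^\pi$. The paper instead proceeds in two stages: it first proves the identity for \emph{deterministic} times $h\in[0,T]$, and only afterwards extends to general $\tau$ by approximating it with simple stopping times $\tau_n=\sum_k t_k 1_{[t_k,t_{k+1})}(\tau)$ taking finitely many values, establishing the result for each $\tau_n$ by induction on the number of values (each inductive step reducing, via the Markov property at $t_1$, to the already-proven deterministic case), and finally letting $\tau_n\to\tau$ by dominated convergence together with the continuity of $V$. The payoff of the paper's detour is that the concatenated strategy (their equation~(\ref{dpp:eqn5})) switches at a \emph{fixed} time $h$, so predictability of the glued control and the ``restart'' of the SDE are immediate. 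Your direct route is more economical but, as you correctly flag, leaves precisely those measurability and restart verifications as the residual obstacle; the paper's deterministic-first reduction is the device that sidesteps them.
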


\begin{proof}
	We shall first argue the theorem for deterministic time $h\in[0,T]$.  Define  the function 
	$$A(x,h)= \sup_{\pi\in \prod^{ad}} \mathbb{E}_x \left[  \int_{0}^{h \wedge \tau^\pi} \phi( X_s^\pi, s ) ds  + V(X^{\pi}_{h \wedge \tau^\pi}) \right]$$ 
	Now, we show that  $V(x)= A(x,h)$. Let $\pi\in \prod^{ad}$, and write
	\begin{equation}\label{eqn:Vpi}
		V^\pi(x)=  \mathbb{E}_x\left[ \int_0^{h \wedge\tau^\pi}   \phi(X_{s}^{\pi},s) ds\right] +   \mathbb{E}_x\left[ \int_h^{\tau^\pi}  1_{\{ \tau^\pi >h\} }  \phi(X_{s}^{\pi},s) ds \right]
	\end{equation}	
	and 
	$$
	\begin{array}{l}
		\mathbb{E}_x\left[ \displaystyle\int_h^{\tau^\pi}  1_{\{ \tau^\pi >h\} }  \phi(X_{s}^{\pi},s) ds  \right] \vspace{0.2cm}  \\
		= \mathbb{E}_x \left[   1_{\{ \tau^\pi  >h\} }  \mathbb{E}\left[ \displaystyle\int_h^{\tau^\pi}  \phi(X_{s}^{\pi},s) ds  \mid  \mathcal{F}_h \right]   \right] \vspace{0.2cm}  \\ 
		\leq  \mathbb{E}_x \left[  1_{\{ \tau^\pi >h\} }  V(X_{h}^\pi)   \right] \vspace{0.2cm} \\
		\leq  \mathbb{E}_x \left[  V(X_{h\wedge \tau^\pi}^\pi) \right] 
	\end{array}
	$$
	Then
	$$
	V^\pi(x) \leq  \mathbb{E}_x\left[ \int_0^{h \wedge\tau^\pi}  \phi(X_{s}^{\pi},s) ds\right] +  \mathbb{E}_x \left[  V(X_{h\wedge \tau^\pi}^\pi) \right] 
	$$
	Taking supremum over strategies  $\pi$, we obtain $V(x)\leq A(x,h)$.
	\medskip
	
	The reverse inequality is a bit more complicated. Let $\varepsilon >0$ be a constant.   Since $V$ is increasing, concave and continuous in $[0,\infty )$ with $\lim_{x\to\infty}V(x)<\infty$, we can find an increasing sequence $\{x_n : n=0,1, 2, ...\}$ with $x_0=0$ and $\lim_{n\to\infty}x_n=\infty$, such that, if $x \in [x_n, x_{n+1})$ then
	\begin{equation}\label{dpp:eqn1}
		V(x) - V(x_{n}) < \dfrac{\varepsilon}{3}
	\end{equation}
	Now, by definition of  function $V$, for each   $x_i$  there exists an admissible  strategy $\pi^{(i)}  \in \prod^{ad}$ such that  
	\begin{equation}\label{dpp:eqn2}
		V(x_i) - V^{\pi^{(i)}}(x_i) < \dfrac{\varepsilon}{3}
	\end{equation}
	If $x \in [x_i, x_{i+1})$ then there exists an admissible  strategy $\widetilde{\pi}^{(i)}  \in \prod^{ad}$ such that  
	\begin{equation}\label{dpp:eqn3}
		V^{\pi^{(i)}}(x_i)  -  V^{\widetilde{\pi}^{(i)}}(x_i)   < \dfrac{\varepsilon}{3}
	\end{equation}
	Then, we obtain the following inequalities
	\begin{equation}\label{dpp:eqn4}
		\begin{array}{ccl}
			V^{\widetilde{\pi}^{(i)}}(x) &>& V^{\widetilde{\pi}^{(i)}}(x_i) -  \dfrac{\varepsilon}{3} \vspace{0.2cm} \\ 
			& > & V(x_i) -  \dfrac{2\varepsilon}{3}  \vspace{0.2cm} \\ 
			& > & V(x) - \varepsilon
		\end{array}
	\end{equation}
	Therefore, we get that $V(x) - V^{\widetilde{\pi}^{(i)} }(x)  < \varepsilon$. Now, for any $\pi \in \prod^{ad}$  we will define a new strategy  $\pi^*$ as follow:
	\begin{equation}\label{dpp:eqn5}
		\pi_t^* = \pi_t 1_{[0,h)}(t) + \sum_{i=0}^{\infty} 1_{\left[h, \tau^\pi \right]}(t) 1_{A_i}(X_h^\pi) \widetilde{\pi}^{(i)} 
	\end{equation}
	\begin{figure}[t] 
		\centering
		\includegraphics[width=10cm]{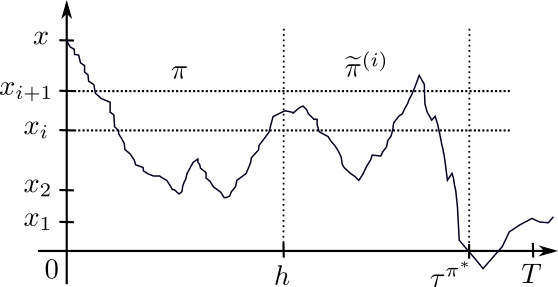}
		\caption{Graphical representation of the  strategy $\pi^*$ defined in (\ref{dpp:eqn5}).}
		\label{strategyDPP}
	\end{figure}
	where  $A_i=[x_i , x_{i+1} ) $ and  $h>0$.   Notice that,  for  $t \leq h$ we  have that    $\pi^{*}_t = \pi_t$. In addition,   in the case   $h< \tau^\pi$, we take the index  $i$  such that  $X_{h}^\pi=X_{h\wedge \tau^\pi}^\pi \in A_{i}$,  and follow the strategy  $\pi^{*}_{t}=\pi_t^{(i)}$, for  $t\in [h, \tau^\pi]$.  Figure \ref{strategyDPP} shows the  graphical representation of the  strategy $\pi^*$.  In addition, notice that if $h< \tau^\pi$, then  $X^\pi_t > 0$, for all $t\in[0,h )$,  $\pi^{*}_t = \pi_t$   for $t\in[0,h\rangle$ and  $\pi^{*}_t = \pi_t^{(i)}$ for  $t\in[h, \tau^\pi]$. Therefore, we have that  
	$\{ \tau^{\pi^*}\leq h  \} = \{ \tau^{\pi}\leq h \} $.  Furthermore, on $\{ \tau^{\pi^*} > h \}$, by Equation (\ref{dpp:eqn4}) we obtain 
	\begin{equation}\label{dpp:eqn6}
		V^{\pi^*}(X_h^{\pi}) \geq  V(X_h^{\pi}) - \varepsilon, \quad \mathbb{P}-\mbox{a.s.  on } \;  \{ \tau^{\pi^*} > h \}
	\end{equation}
	Consenquently,  similar to  Equation (\ref{eqn:Vpi}) we have that 
	
	$$
	\begin{array}{ccl}
		V(x) \geq V^{\pi^*}(x) & = &  \mathbb{E}_x\left[ \displaystyle \int_0^{h \wedge\tau^{\pi^*}}   \phi(X_{s}^{\pi^*},s) ds\right] +   \mathbb{E}_x\left[ \displaystyle \int_h^{\tau^{\pi^*}}  1_{\{ \tau^{\pi^*}  >h\} }  \phi(X_{s}^{\pi^*},s) ds \right]  \vspace{0.2cm} \\
		& = &   \mathbb{E}_x\left[ \displaystyle\int_0^{h \wedge\tau^{\pi}}   \phi(X_{s}^{\pi},s) ds\right] +   \mathbb{E}_x\left[ 1_{\{ \tau^{\pi^*}  >h\} }    \displaystyle \int_h^{\tau^{\pi^*}}  \phi(X_{s}^{\pi^*},s) ds \right]  \vspace{0.2cm} \\
		& =  &   \mathbb{E}_x\left[ \displaystyle\int_0^{h \wedge\tau^{\pi}}   \phi(X_{s}^{\pi},s) ds\right] +  \mathbb{E} \left[ 1_{\{ \tau^{\pi^*}  >h\} }    \displaystyle \int_h^{\tau^{\pi^*}}  \phi(X_{s}^{\pi^*},s) ds  \mid X^{\pi^*}=x \right]  \vspace{0.2cm} \\
		& =  &   \mathbb{E}_x\left[ \displaystyle\int_0^{h \wedge\tau^{\pi}}   \phi(X_{s}^{\pi},s) ds\right] +  \mathbb{E} \left[ 1_{\{ \tau^{\pi^*}  >h\} } \mathbb{E}   \left[      \displaystyle \int_h^{\tau^{\pi^*}}  \phi(X_{s}^{\pi^*},s) ds \mid  X_h^{\pi^*}  \right]  \mid X_0^{\pi^*}=x \right]  \vspace{0.2cm} \\
		& =  &   \mathbb{E}_x\left[ \displaystyle\int_0^{h \wedge\tau^{\pi}}   \phi(X_{s}^{\pi},s) ds\right] +  \mathbb{E}_x \left[ 1_{\{ \tau^{\pi^*}  >h\} } V^{\pi^*}(X_{h \wedge\tau^{\pi}}^{\pi})   \right] 
	\end{array}
	$$
	Now, we use the fact that  
	$$1_{\{ \tau^{\pi^*}  \leq  h\}}  V^{\pi^*} (X_{h \wedge\tau^{\pi}}^\pi)  
	=  1_{\{ \tau^{\pi}  \leq  h\}}  V^{\pi^*} (X_{\tau^{\pi}}^\pi) = 1_{\{ \tau^{\pi^*}  \leq  h\}}  V^{\pi^*} (0))=0$$ 
	Therefore, using Equation (\ref{dpp:eqn6}), we get 
	\begin{equation}
		\begin{array}{ccl}
			V(x) \geq V^{\pi^*}(x) & = &  \mathbb{E}_x\left[ \displaystyle\int_0^{h \wedge\tau^{\pi}}   \phi(X_{s}^{\pi},s) ds  +    V^{\pi^*}(X_{h \wedge\tau^{\pi}}^{\pi})   \right] \vspace{0.2cm} \\ 
			&\geq &   \mathbb{E}_x\left[ \displaystyle\int_0^{h \wedge\tau^{\pi}}   \phi(X_{s}^{\pi},s) ds  +    V^{\pi}(X_{h \wedge\tau^{\pi}}^{\pi})   \right]  - \varepsilon \vspace{0.2cm} \\ 
			&\geq & A(x,h) - \varepsilon 
		\end{array}
	\end{equation}
	Since $\varepsilon >0$ is arbitrary, we obtain that  $V(x)\geq A(x,h)$, proving  the theorem for  $\tau=h$.
	\medskip
	
	We now consider the general case, when  $\tau\in [0,T]$. Let  $t_0=0, t_1,..., t_n=T$ be a partition of $[0,T]$, where  $t_k=\dfrac{k}{n}T$,  $k=0,1,...,n$.  Define $$\tau_n=\sum_{k=0}^{n-1}t_k 1_{[t_k, t_{k+1})} (\tau) = \left\{
	\begin{array}{ccl}
		t_0\equiv 0 & \mbox{if} & \tau\in[t_0, t_1) \\
		t_1 & \mbox{if} & \tau\in[t_1, t_2) \\
		\vdots  &  & \vdots   \\
		t_{n-1} & \mbox{if} & \tau\in[t_{n-1}, t_n) 
	\end{array}
	\right.$$ 
	Note that $\{\tau_n\}$  are simple functions such that  $\tau_n \rightarrow \tau $, $\mathbb{P}$-a.s.  Further, $\tau_n$  take only a finite number of values. Further,  $\tau_n\equiv 0$ if $\tau\in [0, t_1)$ and   $\tau_n \geq t_1$ if $\tau\in [t_1,T)$,   for all $n\geq 1$.  Using the same argument as (\ref{eqn:Vpi}),  it is easy to show that  $V(x) \leq A(x,\tau_n) \;\; \mbox{for all} \; n\geq 1$.
	\medskip 
	
	\noindent  The reverse inequality  shall prove utilizing induction on $n$, in order to 
	\begin{equation}\label{eqn:induction}
		V(x) \geq A(x,\tau_n) \;\; \mbox{for all} \; n\geq 1
	\end{equation} 
	For  $n=1$ we have that $\tau_1 \equiv 0$ on $[0,T)$, so there is nothing to prove. Suppose that  (\ref{eqn:induction}) holds for  $\tau_{n}$. We shall argue  that (\ref{eqn:induction})  holds for $\tau_{n+1}$ as well. For any $\pi\in\Pi^{ad}$, on  $\{\tau_{n+1} < t_1 \}$ we have  that $\tau_{n+1}\equiv 0$ and  
	$$\mathbb{E}_x \left[ 1_{\{\tau_{n+1} < t_1 \}} \left( \displaystyle \int_{0}^{\tau_{n+1} \wedge \tau^\pi} \phi( X_s^\pi, s ) ds  + V(X^{\pi}_{\tau_{n+1} \wedge \tau^\pi}) \right) \right] = \mathbb{E}_x \left[ V(X^{\pi}_{0} ) \right ]$$
	Then, we have  $V(x,\tau_{n+1})=V(x)$ $\mathbb{P}$-a.s. on $\{\tau_{n+1} < t_1 \}$  for all $n$.  Now, on $\{\tau_{n+1} \geq t_1 \}$ we have
	
	\begin{equation}\label{dpp:eqn7}
		\begin{array}{l}
			\mathbb{E}_x \left[  \displaystyle \int_{0}^{\tau_{n+1} \wedge \tau^\pi} \phi( X_s^\pi, s ) ds  + V(X^{\pi}_{\tau_{n+1} \wedge \tau^\pi}) \right] \vspace{0.2cm} \\
			=	\mathbb{E}_x \left[ 1_{ \{\tau^\pi < t_1 \} }  \displaystyle \int_{0}^{\tau^\pi} \phi( X_s^\pi, s ) ds    \right] \vspace{0.2cm} \\
			+  \; 	\mathbb{E}_x \left(  \left[  \displaystyle \int_{0}^{\tau_{n+1} \wedge \tau^\pi} \phi( X_s^\pi, s ) ds  + V(X^{\pi}_{\tau_{n+1} \wedge \tau^\pi}) \right]  1_{ \{\tau_{n+1}  > t_1 \}} 1_{ \{\tau^\pi \geq t_1 \} }     \right. \vspace{0.2cm} \\
			+  \; 	\left.  \left[  \displaystyle \int_{0}^{t_1} \phi( X_s^\pi, s ) ds  + V(X^{\pi}_{t_1}) \right]  1_{ \{\tau_{n+1}  = t_1 \}} 1_{ \{\tau^\pi \geq t_1 \} }   \right) \vspace{0.2cm} 
		\end{array}
	\end{equation}
	$\mathbb{P}$-a.s. on $\{\tau_{n+1} \geq  t_1 \}$.  Note that on the set  $\{ \tau_{n+1} > t_1 \}$, $\tau_{n+1}$ takes only  $n$ values, then by inductional  hypothesis and the Markov property of the process $\{X_t^{\pi}\}$, we have that 
	\begin{equation}
		\begin{array}{l}
			\mathbb{E}_x \left(  \left[  \displaystyle \int_{t_1}^{\tau_{n+1} \wedge \tau^\pi} \phi( X_s^\pi, s ) ds  + V\left(X^{\pi}_{\tau_{n+1} \wedge \tau^\pi} \right) \right]  1_{ \{\tau_{n+1}  > t_1 \}} 1_{ \{\tau^\pi  \geq  t_1 \} }     \right) \vspace{0.2cm}\\
			\leq  \mathbb{E}_x\left( V(X_{t_1}^\pi)   1_{ \{\tau_{n+1}  > t_1 \}} 1_{ \{\tau^\pi \geq t_1 \} }  \right)	
		\end{array}
	\end{equation}
	Utilizing this  inequality into (\ref{dpp:eqn7}) we obtain
	\begin{equation}\label{dpp:eqn8}
		\begin{array}{l}
			\mathbb{E}_x \left[  \displaystyle \int_{0}^{\tau_{n+1} \wedge \tau^\pi} \phi( X_s^\pi, s ) ds  + V(X^{\pi}_{\tau_{n+1} \wedge \tau^\pi}) \right] \vspace{0.2cm} \\
			\leq 	\mathbb{E}_x \left[ 1_{ \{\tau^\pi < t_1 \} }  \displaystyle \int_{0}^{\tau^\pi} \phi( X_s^\pi, s ) ds    \right] \vspace{0.2cm} \\
			+  \; 	\mathbb{E}_x \left(  \left[  \displaystyle \int_{0}^{t_1} \phi( X_s^\pi, s ) ds  + V(X^{\pi}_{t_1}) \right]  1_{ \{\tau_{n+1}  > t_1 \}} 1_{ \{\tau^\pi \geq t_1 \} }     \right. \vspace{0.2cm} \\
			+  \; 	\left.  \left[  \displaystyle \int_{0}^{t_1} \phi( X_s^\pi, s ) ds  + V(X^{\pi}_{t_1}) \right]  1_{ \{\tau_{n+1}  = t_1 \}} 1_{ \{\tau^\pi  \geq  t_1 \} }     \right) \vspace{0.2cm} \\
			\leq  \;  \mathbb{E}_x  \left[ 1_{ \{\tau^\pi < t_1 \} }  \displaystyle \int_{0}^{\tau^\pi} \phi( X_s^\pi, s ) ds    \right]  \vspace{0.2cm} \\
			+  \; \mathbb{E}_x \left[   1_{ \{\tau^\pi \geq t_1 \} } \left( V(X_{t_1}^\pi)  +  \displaystyle \int_{0}^{t_1} \phi( X_s^\pi, s ) ds \right)   \right]  \vspace{0.2cm} \\
			=  \; \mathbb{E}_x \left[  \displaystyle \int_{0}^{t_1  \wedge \tau^\pi} \phi( X_s^\pi, s ) ds  + V(X_{t_1}^\pi)    \right] \leq V(x)
		\end{array}
	\end{equation}
	$\mathbb{P}$-a.s. on $\{\tau_{n+1} \geq  t_1 \}$. The las inequality  is due to (\ref{dpptheo}) for fixed time $t_1$. Consequently, we obtain   $A(x,\tau_n) \leq V(x) $ for all $n$, whence  $A(x,\tau_n) = V(x)$ for all $n$. Finally, utilizing dominated convergence theorem, together with the continuity of the value function, we proof the general identity  of (\ref{dpptheo}).  This complete the proof. 
\end{proof}

\section{The Hamilton-Jacobi-Bellman equation}\label{Hamilton-Jacobi}
We are now ready to calculate  the  Hamilton-Jacobi-Bellman (HJB) equation associated to our optimization problem  (\ref{optimizationprob}).   Given   $\varepsilon >0$ be a constant there exist a strategy $\pi \in \prod^{ad}$ and  $h>0$   such that  $V(h,x) < V^{\pi}(h,x) + \varepsilon$.  Using  It\^{o}'s Lemma for a function of two variables  and  a similar argument as  in Section 2.2 of Azcue and Muler \cite{AzMuBook},  we find 
\begin{equation}\label{generator}
	\begin{array}{ccl}
		\lim_{t\downarrow 0}\displaystyle\frac{1}{t} \mathbb{E}_x \left[  f(\tau^\pi\wedge t, X_{\tau^\pi\wedge t}^\pi) - f(0,x) \right] &=& \displaystyle\frac{1}{2} \left[ \sigma^\pi(x) \right]^2  f_{xx}(0,x) +  \mu^\pi(x) f_{x}(0,x) + f_t(0,x)  \vspace{0.3cm}\\
		&&  + \lambda \displaystyle\int_{[0, \infty)} [ f(0, x-z) - f(0, x) ]dF_U(z)
	\end{array}
\end{equation}
where $x\in\R$,  $\mu^\pi(x)=  c + \mu \pi(x) + r( x   - \pi(x)) $,   $\sigma^{\pi}(x)=\sigma \pi(x)$  and  $F_U$ is the distribution function of claim size. By Theorem \ref{DPP} we have that 
\begin{equation}
	\begin{array}{ccl}
		V(0,x) &=& \sup_{\pi} V^\pi(0,x) \vspace{0.2cm} \\
		& \geq &  \mathbb{E}_x \left[ \displaystyle\int_0^{\tau^\pi\wedge h} \phi(s, X_s^\pi)ds  +  V^\pi(\tau^\pi\wedge h, X_{\tau^\pi\wedge h}^\pi )  \right] \vspace{0.2cm} \\
		& > &  \mathbb{E}_x \left[ \displaystyle\int_0^{\tau^\pi\wedge h} \phi(s, X_s^\pi)ds  +  V (\tau^\pi\wedge h, X_{\tau^\pi\wedge h}^\pi )  \right] - \varepsilon \vspace{0.2cm} \\
	\end{array}
\end{equation}
In the before equation we can let $\varepsilon$ tend to zero. Assuming  that $V(t,x)$ is twice continuously differentiable in $x$ and continuously differentiable in $t$, and dividing by $h$ and letting  $h\downarrow 0$ we get 
$$\phi(0,x) + \displaystyle\frac{1}{2} \left[ \sigma^\pi(x) \right]^2  f_{xx}(0,x) +  \mu^\pi(x) f_{x}(0,x) + f_t(0,x)   + \lambda \mathcal{I}\left( V(0,x) \right)  \leq 0 $$
where  $ \mathcal{I}[V] \left( 0,x \right) = \displaystyle\int_{[0, \infty)} [ V(0, x-z) - V(0, x) ]dF_U(z)$. This inequality has to hold for all $\pi$. Therefore, this motives the Hamilton-Jacobi-Bellman equation by our optimization problem 
\begin{equation}\label{HJB}
	\phi(t,x) +  V_t(t,x)   + \mathcal{L}[V](t,x) = 0, \;\; (t,x)\in [0,T)\times [0, \infty) 
\end{equation}
with the boundary conditions $V(t,0)=V(T,x)=0$, and 
where  $\mathcal{L}$ is the following second order partial integro-differential operator
$$\mathcal{L}[V](t,x) =  \displaystyle\sup_{\pi}	\left\{   \displaystyle\frac{1}{2} \left[ \sigma^\pi(x) \right]^2  V_{xx}(t,x) +  \mu^\pi(x) V_{x}(t,x) + \lambda \mathcal{I}[V] \left( t,x \right)   \right\}$$
for $V\in \mathcal{C}_0^{1,2}\left( [0,T]\times[0,\infty) \right)$, the set of all functions  twice continuously differentiable in $x$ and continuously differentiable in $t$.
\medskip

In addition, by Lemma \ref{lemmaconcave},  $V:[0,T]\times \mathbb{R}_+ \to \mathbb{R}_+$ is strictly increasing and concave in $x$ with the boundary condition $V(t,0)=V(T,x)=0$.  Because the left hand side of (\ref{HJB})  is quadratic in  $\pi$, we find that

\begin{equation}\label{optimal_strategy1}
	\pi^* = - \dfrac{(\mu - r) V_x}{\sigma^2 V_{xx}}
\end{equation}
Strategy  $\pi^*$ will be our candidate for the optimal strategy. As $\pi(X_t^\pi)$ is the amount invested in the risky asset, we have that   $\pi(X_t^\pi)= \theta(t,X_t^\pi) X_t^\pi$, where $\theta$ represent the proportion of the surplus invested  in the risky asset at time $t$ (hence $\theta(t,X_t^\pi) \in[0,1]$ for all $t\in [0,T]$). Then, using (\ref{optimal_strategy1}) we find that  
\begin{equation}\label{optimal_strategy2}
	\theta^*(t,x) = - \dfrac{(\mu - r) V_x}{\sigma^2 xV_{xx}}
\end{equation}
Note that, if the expected return  of the stock market $\mu$ is greater than  the risk-free rate $r$, then   $\theta^*(t,x) \geq 0$, but that does not guarantee that $\theta^*(t,x)$ is less than 1.  Thus, if $\mu > r$ the right strategy would be  $\min\{\theta^*(t,x), 1\}$. In the other hand, if $\mu \leq r$ the right strategy will be  $\theta^*(t,x) \equiv 0$.

\begin{thm}\label{existencia}
	Suppose that there exists  a solution $f(t,x)$ of (\ref{HJB}) that is a function  twice continuously differentiable in $x$ and continuously differentiable in $t$ with boundary conditions $f(T,x)=0$. Then  $V(t,x)\leq f(t,x)$.  If 
	\begin{equation}
		\theta^*(t,x) = - \dfrac{(\mu - r) f_x}{\sigma^2 x f_{xx}}
	\end{equation}
	is bounded and $f(t,0)=0$ for all $t$, then  $V(t,x)=f(t,x)$ and an optimal strategy is given by $\{ \pi^*(t,X_t^{\pi^*}) : t\in [0, \tau^{\pi^*}\wedge T] \}$.
\end{thm}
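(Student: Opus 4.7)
This is a classical verification theorem: the strategy is to apply It\^o's formula for jump-diffusions to $f(t,X_t^\pi)$ along an arbitrary admissible $\pi$, exploit the HJB inequality to dominate the drift, and then take expectations. Concretely, for any $\pi \in \prod_x^{ad}$, I would write $X_t^\pi$ as the sum of its continuous part (driven by the Brownian motion and the deterministic drift $\mu^\pi$) and its jump part (coming from $-dQ_t$). It\^o's formula then yields, for $t \in [0, T\wedge \tau^\pi]$,
\begin{equation*}
f(t,X_t^\pi) - f(0,x) = \int_0^t \bigl[ f_s + \mu^\pi(X_s^\pi) f_x + \tfrac{1}{2}(\sigma^\pi(X_s^\pi))^2 f_{xx} + \lambda \mathcal{I}[f](s,X_s^\pi) \bigr] ds + M_t,
\end{equation*}
where $M_t$ is the sum of the It\^o stochastic integral $\int_0^t \sigma^\pi(X_s^\pi) f_x(s,X_s^\pi)\, dW_s$ and the compensated jump martingale associated with $Q$. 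The bracketed integrand is, by construction, $\leq -\phi(s,X_s^\pi)$ for every $\pi$, because $f$ solves the HJB equation and the operator $\mathcal{L}$ is defined as a supremum over $\pi$.

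The next step is to stop at $T \wedge \tau^\pi$ (localizing $M$ if needed so that the stopped process is a genuine martingale, which is legitimate since the integrand of the It\^o part involves the bounded control and locally bounded $f_x$, and the compensated Poisson integral is a martingale under the integrability of $f$ that comes from its concavity plus $\lim_{x\to\infty}\phi_x=0$). Taking expectations gives
\begin{equation*}
\E_x\bigl[ f(T\wedge\tau^\pi, X_{T\wedge\tau^\pi}^\pi) \bigr] - f(0,x) \leq - \E_x \int_0^{T\wedge\tau^\pi} \phi(s,X_s^\pi)\, ds.
\end{equation*}
On the event $\{\tau^\pi \geq T\}$ the terminal term vanishes by $f(T,\cdot)=0$; on $\{\tau^\pi < T\}$ the convention $X_t^\pi = 0$ for $t\geq \tau^\pi$, together with the boundary behavior $f(t,0)=0$ used in the second half of the theorem (and the sign/monotonicity of $f$ together with $f(T,\cdot)=0$ in the first half), shows that the terminal term is non-positive. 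Rearranging and taking the supremum over $\pi$ gives $V(t,x) \leq f(t,x)$, which establishes the first assertion.

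For the reverse inequality, I would plug in the candidate $\pi^*(t,x) = \theta^*(t,x)\, x$ with $\theta^*$ as in the statement. Because $\theta^*$ is assumed bounded, the coefficients of the controlled SDE \eqref{mertondyn} under $\pi^*$ are Lipschitz in $x$ and linear growth holds, so a unique strong non-negative solution $X_t^{\pi^*}$ exists; hence $\pi^* \in \prod_x^{ad}$. By the definition of $\theta^*$, this $\pi^*$ is precisely the maximizer inside $\mathcal{L}[f]$, so the HJB inequality becomes equality along the trajectory of $X_t^{\pi^*}$, and the previous chain of (in)equalities turns into the identity
\begin{equation*}
f(0,x) = \E_x \int_0^{T\wedge\tau^{\pi^*}} \phi(s,X_s^{\pi^*})\, ds + \E_x\bigl[f(T\wedge \tau^{\pi^*},X_{T\wedge\tau^{\pi^*}}^{\pi^*})\bigr].
\end{equation*}
Using $f(T,\cdot)=0$ and $f(t,0)=0$, the last expectation vanishes, so $f(0,x) = V^{\pi^*}(x) \leq V(x)$. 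Combined with $V\leq f$ we get $V=f$ and the optimality of $\pi^*$.

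\textbf{Main obstacle.} The conceptually routine part is the It\^o/HJB manipulation; the delicate issues are (i) controlling the terminal boundary term at the stopping time $T\wedge \tau^\pi$, which is why the boundary hypotheses $f(T,x)=0$ and $f(t,0)=0$ appear and why the first statement only yields an inequality, and (ii) ensuring the local martingale $M$ is a true martingale after stopping, which will rely on boundedness of $\theta^*$ (so that $\sigma^\pi$ is controlled) together with the concavity/growth of $f$ inherited from the regularity assumption. Verifying admissibility of $\pi^*$ (uniqueness of the SDE solution and non-negativity of $X^{\pi^*}$) is the remaining technical point, and this is exactly what the boundedness of $\theta^*$ is designed to deliver.
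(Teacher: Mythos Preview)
Your overall strategy---It\^o's formula for the jump-diffusion, bound the drift via the HJB inequality, then take expectations---is exactly what the paper does. For the inequality $V\le f$ the paper is actually simpler than your argument: it does not try to kill the terminal term via boundary conditions but just observes $f(\tau^\pi\wedge T,X^\pi_{\tau^\pi\wedge T})\ge 0$ and drops it. Note that $f(t,0)=0$ is \emph{not} assumed in the first part of the statement, so your appeal to it there is not available.

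The genuine gap is in your treatment of the reverse inequality. You write that under $\pi^*$ the chain of inequalities becomes an identity and then invoke the boundary conditions directly at $T\wedge\tau^{\pi^*}$. The problem is that $f_x$ need not be bounded near $x=0$ (think $f(t,x)\sim x^{1-\alpha}$), so the It\^o integrand $\sigma\theta^*(t,X_t^*)X_t^*\,f_x(t,X_t^*)$ is not obviously square-integrable, and the local martingale $M$ may fail to be a true martingale on $[0,\tau^{\pi^*}\wedge T]$. Boundedness of $\theta^*$ alone does not fix this; it controls the growth of $\sigma^{\pi^*}$ at infinity, not the blow-up of $f_x$ at the origin. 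The paper handles this by introducing $\tau_\varepsilon=\inf\{t>s:X_t^*<\varepsilon\}$: on $\{X^*\ge\varepsilon\}$ concavity gives $f_x(t,x)\le f_x(t,\varepsilon)$, so $f_x$ is bounded there, and combined with the second-moment bound on $X_t^*$ (coming from bounded $\theta^*$) this makes the stopped stochastic integral a genuine martingale. One then obtains
\[
\mathbb{E}_x\Bigl[f(\tau_\varepsilon\wedge T,X^*_{\tau_\varepsilon\wedge T})+\int_s^{\tau_\varepsilon\wedge T}\phi(r,X_r^*)\,dr\Bigr]=f(s,x)
\]
and lets $\varepsilon\downarrow 0$: the running integral converges monotonically, and the terminal term is split as $\mathbb{E}_x[f(\tau_\varepsilon,\varepsilon);\tau_\varepsilon\le T]+\mathbb{E}_x[f(T,X_T^*);\tau_\varepsilon>T]$, both of which vanish in the limit by the two boundary conditions $f(t,0)=0$ and $f(T,x)=0$. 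This $\varepsilon$-localization is the missing technical ingredient in your sketch; once you insert it, your argument matches the paper's.
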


\begin{proof}
	Let  $\pi\in\Pi^{ad}$ and  let $0< T_1 < \cdots < T_i < \cdots < t$ be the   interarrival times in $[0, t)$.  The controlled investment  process $X_t^{\pi}$ has  finite jumps on each finite time interval $[0, t)$. The jump size of the process $X_t^{\pi}$ at time $t$ es denoted by  $\Delta X_t^{\pi} = X^{\pi}_t - X^{\pi}_{t^-}$, then 
	$$\Delta X_{T_i}^{\pi} = X^{\pi}_{T_i} - X^{\pi}_{{T_i}^-} = - U_i$$
	where  $U_i$ is the claim size at time $T_i$.  Consider  the process $f(t,X_t^\pi)$ conditioned on  $X_s=x$.  Using  It\^{o}'s Lemma for a jump process (see for details  \linebreak \cite{Tankov2004}),  we have  that 
	$$
	\begin{array}{c}
		f(t,X_t^{\pi}) - f(s,x) =  \displaystyle\int_s^t \left[  \dfrac{\partial f}{dt}(s, X_s^{\pi}) +  \mu^{\pi}\dfrac{\partial f}{dx}(s, X_{s^-}^{\pi}) +  \dfrac{[\sigma^\pi]^2}{2}\dfrac{\partial^2 f}{d^2x}(s, X_{s^-}^{\pi})   \right]ds \vspace{0.2cm}\\
		+  \displaystyle\int_s^t \sigma^\pi \dfrac{\partial f}{dx}(s, X_{s^-}^{\pi})dW_s + \sum_{ \{  i\geq 1 : \;  s \leq  T_i < t \} } \left[ f(T_i, X_{T_i^{-}}^{\pi} + \Delta X_{T_i}^\pi ) - f(T_i, X_{T_i^{-}}^{\pi})  \right]
	\end{array}
	$$
	It is known that the stochastic integral is a martingale. Thus  
	$$ \mathbb{E}_x\left[  \displaystyle\int_s^t \sigma^\pi \dfrac{\partial f}{dx}(s, X_{s^-}^{\pi})dW_s  \right]=0.$$ 
	Then, by   the Hamilton-Jacobi-Bellman equation (\ref{HJB})  we have that 
	$$\mathbb{E}_x\left[ f(\tau^\pi\wedge T ,X_t^{\pi})  + \int_{s}^{\tau^\pi\wedge T} \phi(r ,X_r^{\pi})dr  \right]  \leq f(s,x) $$
	Because   $ f(\tau^\pi\wedge T ,X_t^{\pi}) \geq 0$, we obtain the following inequality
	$$V^{\pi}(s,x)=\mathbb{E}_x\left[ \int_{s}^{\tau^\pi\wedge T} \phi(r ,X_r^{\pi})dr \right]  \leq f(s,x) $$
	Now, if we take supremum over all strategies $\pi\in\Pi_{ad}$, we have that $V(s,x)\leq f(s,x)$.   
	\medskip
	
	Suppose now that  $\theta^*$ is bounded by a value $\overline{\theta}=\sup \theta(t,x)$. Note that  by the Hamilton-Jacobi-Bellman equation (\ref{HJB}),   $\theta^*>0$ whenever  $X_t^*>0$, where $\{ X_t^*\}$ is the process with optimal  investment strategy $\pi^*$. Choose $\varepsilon < x$ and define $\tau_{\varepsilon} = \inf\{t>s : X_t^* < \varepsilon \}$. Note that  $\tau_{\varepsilon} \to \tau^*$ converges monotonically as $\varepsilon \to 0$, where $\tau^*$ is the  time to ruin of the  process $X_t^*$.  
	\medskip
	
	As $f$ is a  concave function, we have that   $f_{xx}<0$. Thus,  $f_x$ is increasing on $x$  and $f_x(t,x)\leq f_x(t,\varepsilon)$ for all $t$.  Then, $f_x$ is bounded on  $[0,T]\times [\varepsilon, \infty \rangle$. Now, by the Hamilton-Jacobi-Bellman equation (\ref{HJB}) and the It\^{o} formula we obtain 
	\begin{equation}
		\mathbb{E}_x\left[ f(t ,X_t^{*})  + \int_{s}^{t} \phi(r , X_r^{*})dr  \right]  = f(s,x)  +  \mathbb{E}_x\left[  \displaystyle\int_s^t \sigma^{\pi^*} \dfrac{\partial f}{dx}(r, X_r^{*})dW_r  \right]
	\end{equation}
	In addition, as $\theta^*$ is bounded, then the second moment of   $X_t^{*}$ is bounded for all $t$, and  $\int_{0}^{T}\mathbb{E}[X_t^{*}]dt < \infty$. Thus, we have that    $\displaystyle\int_s^t \sigma^{\pi^*} \dfrac{\partial f}{dx}(r, X_r^{*})dW_r$ is a martingale (for more details see  \cite{Karatzas1991},  \cite{Watanabe1981}). Therefore, we obtain the following equation,
	$$	\mathbb{E}_x\left[ f(\tau_{\varepsilon} \wedge T ,X_{\tau_{\varepsilon} \wedge T}^{*})  + \int_{s}^{\tau_{\varepsilon} \wedge T} \phi(r , X_r^{*})dr  \right]  = f(s,x) $$
	\begin{figure}[t] 
		\centering
		\includegraphics[width=9cm]{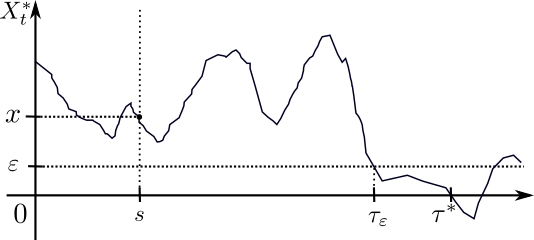}
		\caption{Graphical representation of the stopping time $\tau_{\varepsilon}$.}
	\end{figure}
	Notice  that,  as $\varepsilon \to 0$  we have that  
	$$\int_{s}^{\tau_{\varepsilon} \wedge T} \phi(r , X_r^{*})dr  \to \int_{s}^{\tau^* \wedge T} \phi(r , X_r^{*})dr$$ 
	monotonically. The first term can be written as follow    
	$$\mathbb{E}_x\left[ f(\tau_{\varepsilon},\varepsilon ) :  \tau_{\varepsilon} \leq T \right] +  \mathbb{E}_x\left[ f( T, X_T^{*}) :  \tau_{\varepsilon}>T \right] $$
	The term $\mathbb{E}_x\left[ f(\tau_{\varepsilon},\varepsilon ) :  \tau_{\varepsilon} \leq T \right]$ is bounded by  $f(0,\varepsilon)\mathbb{P}[\tau_{\varepsilon} \leq T]$, which is uniformly bounded. Thus, by the boundary condition,  it converges to $\mathbb{E}_x\left[ f(\tau^*, 0) :  \tau^* \leq T \right]=0$. Again, by boundary condition,  the second term converges monotonically  to  $$\mathbb{E}_x\left[ f( T, X_T^{*}) :  \tau^* > T \right]=0.$$ This completes  the proof.  
\end{proof}
\medskip

Finally, suppose that $f(t,x)$ is a solution of (\ref{HJB})  twice continuously differentiable in $x$ and continuously differentiable in $t$ with boundary conditions $f(T,x)=0$. Assume that there exists  some  constant  $M<\infty$ such that  
$$\theta^*(t,x) = - \dfrac{(\mu - r) f_x}{\sigma^2 x f_{xx}} \leq M$$
Remember that  $f_x > 0$ and  $f_{xx}<0$. Then 
$$\dfrac{\partial }{\partial x} \ln \left( f_x(t,x) \right) \leq -\dfrac{(\mu - r)}{M\sigma^2}\times \dfrac{1}{x}$$
Therefore, for some constants $\alpha\in \langle 0,1\rangle$  and $K>0$ we have that 
$$f(t,x) \leq K x^{1-\alpha}$$
This bound for the solution motivate trying a solution of the form  $V(t,x)=f^\alpha(t)x^{1-\alpha}$.

\section{Numerical example}\label{numerical}
We apply the method  to the  Cobb-Douglas type utility function   $\phi(t,x)=x^{1-\alpha} e^{-\kappa\alpha t}$, where  $\alpha\in \langle 0,1 \rangle$, $\kappa\in \mathbb{R}$, in order to  solve the optimization problem  (\ref{optimizationprob}) trying a solution of the form  $V(t,x)=f^\alpha(t)x^{1-\alpha}$ for a continuously differentiable function  $f: [0,T]\to \mathbb{R}_+$ with boundary condition $f(T)=0$.  By (\ref{optimal_strategy1}) we have that  the optimal strategy for  our problem  is given by
\begin{equation}\label{proportion1}
	\pi^*(x) = \dfrac{(\mu - r)}{\sigma^2\alpha}x
\end{equation}
and  the optimal  proportion of the surplus invested  in the risky asset at time $t$ is given by
\begin{equation}\label{proportion2}
	\theta^*= \dfrac{(\mu - r)}{\sigma^2\alpha}
\end{equation}
Notice that in this case solving our optimization problem  we obtain  that  the optimal proportion to be invested in the  risky asset is the Merton ratio, introduced  in \cite{Merton1969}.  Next,  by the HJB equation (\ref{HJB}) we obtain the following equation
$$
\begin{array}{c}
	\phi(x,t) +  \alpha f^{\alpha-1}(t) f'(t)x^{1-\alpha}   +  \left[ c+rx + (\mu-r)\theta^* x \right] (1-\alpha) f^{\alpha}(t) x^{-\alpha} \vspace{0.2cm} \\
	- \dfrac{1}{2}\sigma^2 (\theta^*)^2 x^2\alpha(1-\alpha) f^\alpha(t)x^{-\alpha-1} +   \lambda f^\alpha(t) \mathbb{E}\left( (x-U)^{1-\alpha} -x^{1-\alpha}  \right) = 0 \\
\end{array}
$$
Thus, the function $f(t)$  satisfies the following relation
$$\alpha \dfrac{f'(t)}{f(t)} + \dfrac{e^{-\kappa \alpha t}}{f^\alpha(t)} -  \dfrac{1}{2}  \alpha(1-\alpha) \sigma^2 (\theta^*)^2  = K(x)$$
with boundary condition $f(T)=0$,   where 
$$K(x) = \dfrac{\lambda} {x^{1-\alpha}} \mathbb{E}\left( x^{1-\alpha} - (x-U)^{1-\alpha}  \right) - \dfrac{c(1-\alpha)}{x} -  \left[r + (\mu-r)\theta^* \right](1-\alpha)  $$ 
for all $t$ and $x$. 
Thus, we have that $f$ satisfies the differential equation
$$\alpha f'(t)  + f^{1-\alpha}(t) e^{-\kappa\alpha t} = \left( K + \dfrac{1}{2}  \alpha(1-\alpha) \sigma^2 (\theta^*)^2 \right)f(t) $$
In order to solve these  we are now going to use the substitution $z(t)=f^\alpha(t)$ obtaining the differential equation $z'(t) -Rz(t)= e^{-\kappa\alpha t}$, where $R= K + \dfrac{1}{2}  \alpha(1-\alpha) \sigma^2 (\theta^*)^2$.  
We thus find that the solution is given by
\begin{equation}
	f(t)=\dfrac{e^{-\kappa t }}{(R+\kappa\alpha)^{1/\alpha}} \left( e^{-(R+\kappa\alpha)(T-t) }  -  1  \right)
\end{equation}
\medskip

\noindent {\bf Exponential claim sizes:}
Assume  that  the claim sizes $U_i$ has $Exp(\theta)$ distribution, i.e.,  $U_i \sim Exp(\theta)$. Then, we obtain the following  identity
$$\mathbb{E} \left( x- U \right)^{1-\alpha}  = B(2-\alpha ,1) x^{2-\alpha} {}_{1}F_1(1,3-\alpha,-\theta x) \leq  B(2-\alpha,1) x^{2-\alpha} e^{-\theta x}$$
Thus,  we have that $\mathbb{E} \left( x- U \right)^\alpha  \to 0$  as $x\to \infty$. Using this result  is straightforward  prove  the  following asymptotic result for the function $K$.
\begin{prop}
	Let $\{U_i\}$ be a sequence of independent, identically distributed (i.i.d). random variables having $Exp(\theta)$ distribution. Then the following hold.
	\begin{enumerate}
		\item If $x\to \infty$, then  
		$$K(x)\to  \lambda  -   \alpha(r+\theta^*(\mu-r) ) $$
		where  $\theta^*$ is the optimal  proportion of the surplus invested  in the risky asset.
		\item If $x\to 0^+$, then  $K(x)\to -\infty$.		  
	\end{enumerate}
\end{prop}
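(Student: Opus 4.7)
The plan is to analyse $K(x)$ termwise, writing
\begin{equation*}
K(x) \;=\; \underbrace{\frac{\lambda}{x^{1-\alpha}}\mathbb{E}\bigl[x^{1-\alpha}-(x-U)^{1-\alpha}\bigr]}_{=:K_1(x)} \;-\; \underbrace{\frac{c(1-\alpha)}{x}}_{=:K_2(x)} \;-\; (1-\alpha)\bigl[r+(\mu-r)\theta^*\bigr].
\end{equation*}
The last summand is a constant, the middle summand $K_2$ is elementary (it vanishes at $\infty$ and blows up to $+\infty$ at $0^+$), so the whole argument reduces to two one-sided limits for $K_1$, the only piece that depends on the distribution of $U$.

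For part~1 ($x\to\infty$) I would first rewrite $K_1(x)=\lambda-\lambda\,x^{\alpha-1}\,\mathbb{E}(x-U)^{1-\alpha}$ and then invoke the bound stated just before the proposition,
\begin{equation*}
\mathbb{E}(x-U)^{1-\alpha} \;\leq\; B(2-\alpha,1)\,x^{2-\alpha}\,e^{-\theta x}.
\end{equation*}
This yields $x^{\alpha-1}\,\mathbb{E}(x-U)^{1-\alpha}\leq B(2-\alpha,1)\,x\,e^{-\theta x}\to 0$ as $x\to\infty$, so $K_1(x)\to\lambda$; combined with $K_2(x)\to 0$ and the constant last term, the announced limit follows.

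For part~2 ($x\to 0^+$) the dominant contribution is $-K_2(x)\to-\infty$, so it suffices to check that $K_1(x)$ stays bounded. Under the natural ruin convention that extends $V(t,\cdot)$ by zero on $(-\infty,0)$ (consistent with $V(t,0)=0$), the expectation $\mathbb{E}(x-U)^{1-\alpha}$ is interpreted as $\mathbb{E}\bigl[(x-U)^{1-\alpha}\mathbf{1}_{\{U\leq x\}}\bigr]$, which is bounded above by $x^{1-\alpha}\mathbb{P}(U\leq x)=x^{1-\alpha}(1-e^{-\theta x})$. Hence $0\leq 1-K_1(x)/\lambda\leq 1-e^{-\theta x}\to 0$, so $K_1(x)\to\lambda$; combined with $-K_2(x)\to-\infty$, this gives $K(x)\to-\infty$.

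The main obstacle I anticipate is justifying the decay estimate used in part~1: a crude Taylor expansion of $(x-U)^{1-\alpha}$ would only give $\mathbb{E}(x-U)^{1-\alpha}\sim x^{1-\alpha}$, which is insufficient for our purposes, so the exponential tail of the claim-size distribution has to be exploited in full. If the ${}_1F_1$-based formula proves awkward to apply directly, I would fall back on the explicit representation $\mathbb{E}(x-U)^{1-\alpha}=\int_0^x(x-u)^{1-\alpha}\theta\,e^{-\theta u}\,du$ and a change of variables $v=x-u$ to extract an $e^{-\theta x}$ factor explicitly. Once the decay rate is secured, the rest of both parts is routine manipulation of convergent sequences.
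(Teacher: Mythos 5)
Your part 2 is sound: with the interpretation $\mathbb{E}\bigl[(x-U)^{1-\alpha}\mathbf{1}_{\{U\le x\}}\bigr]$ the first term of $K$ stays between $0$ and $\lambda$ (indeed tends to $\lambda$ as $x\to 0^+$), and $-c(1-\alpha)/x$ drives $K(x)$ to $-\infty$. Part 1, however, rests on a step that fails. The bound you invoke, $\mathbb{E}(x-U)^{1-\alpha}\le B(2-\alpha,1)\,x^{2-\alpha}e^{-\theta x}$, is false for large $x$: by Kummer's transformation ${}_1F_1(1,3-\alpha,-\theta x)=e^{-\theta x}\,{}_1F_1(2-\alpha,3-\alpha,\theta x)\ge e^{-\theta x}$, so the inequality is in fact reversed, and asymptotically ${}_1F_1(1,3-\alpha,-\theta x)\sim (2-\alpha)/(\theta x)$, only algebraic decay. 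With the same interpretation you use in part 2 (which is forced, since $(x-U)^{1-\alpha}$ is undefined on $\{U>x\}$ and the value function vanishes at ruin), one has $\mathbb{E}\bigl[(x-U)^{1-\alpha}\mathbf{1}_{\{U\le x\}}\bigr]=x^{1-\alpha}\,\mathbb{E}\bigl[(1-U/x)^{1-\alpha}\mathbf{1}_{\{U\le x\}}\bigr]$, and by dominated convergence the last expectation tends to $1$; hence $x^{\alpha-1}\mathbb{E}(x-U)^{1-\alpha}\to 1$, so your $K_1(x)\to 0$, not $\lambda$. Your proposed fallback does not repair this: after $v=x-u$ one gets $\theta e^{-\theta x}\int_0^x v^{1-\alpha}e^{\theta v}\,dv$, and the integral grows like $x^{1-\alpha}e^{\theta x}/\theta$, exactly cancelling the extracted exponential factor. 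The remark you dismiss as a ``crude Taylor expansion'' actually gives the correct asymptotics $\mathbb{E}(x-U)^{1-\alpha}\sim x^{1-\alpha}$; no exponential tail can improve it, because the relevant event $\{U\le x\}$ has probability tending to $1$.

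Consequently your argument, if carried out correctly, yields $K(x)\to -(1-\alpha)\bigl[r+(\mu-r)\theta^*\bigr]$ as $x\to\infty$ rather than the announced value. Note also that even granting your (false) decay estimate, your computation produces $\lambda-(1-\alpha)\bigl[r+(\mu-r)\theta^*\bigr]$, which is not the stated limit $\lambda-\alpha\bigl(r+\theta^*(\mu-r)\bigr)$; you assert ``the announced limit follows'' without addressing this coefficient mismatch. To be fair, the paper's own justification consists of exactly the displayed ${}_1F_1$ bound followed by the claim $\mathbb{E}(x-U)^{1-\alpha}\to 0$, so the defect is inherited from the source; but as written your part 1 does not prove the statement, and the statement itself cannot be recovered by this route.
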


\subsection{Simulation of the ruin probability with and without optimal investment}


In this  section, we conduct Monte Carlo simulation  in order to compute and compare the behavior of the ruin probability with and without optimal investment. For the purpose of comparison, ruin probability with and without optimal investment are calculated to Pareto and Weibull distribution as claim sizes.  The ruin probability  $\psi^{\pi}(x,T)$  for the  risk process with  investment  $X_t^{\pi}$ can be simulated by randomly drawing sample paths  according  to the process $X_t^{\pi}$ and  counting the trajectories that lead to ruin and dividing this number by the total number  $N$ of simulated trajectories.  Thus,  we get an unbiased estimator of the ruin probability 
\begin{equation}\label{key}
	\psi^{\pi}(x,T)=\mathbb{P}_x[\tau^\pi < T]	\approx	\hat{\psi}^{\pi}(x,T) = \dfrac{1}{N}  \sum_{i=1}^{N} 1_A(w_i)
\end{equation}
where $A$ is the set of all trajectories $w_i$ that lead to ruin up to time  $T$.  Notice that to estimate the ruin probability it is necessary to simulate a sample trajectories of the  jump-diffusion process 
$$dX_t^{\pi}= \left[c + \mu\pi_tX_t^{\pi} + r(1-\pi_t)X_t^{\pi} \right] dt +  \sigma\pi_t X_t^{\pi}dW_t - dQ_t $$
where $Q_t$ is a compound Poisson process. 
\medskip  

Now, we wish to simulate  paths of the process  $\{X_t^{\pi}\}$ without knowing  its distribution or an explicit solution to the equation SDE  in order to know if each simulated trajectory goes to ruin or not.    We can simulate a discretized version of the   jump-diffusion process.  In particular, we simulate a discretized trajectories, $\{ \hat{X}_0, \hat{X}_{h}, \hat{X}_{2h}, ... , \hat{X}_{nh}  \}$ where  $n$  is the number of time steps,  $h$ is a constant and $h=T/n$.  The smaller the value of $h$, the closer our discretized path will be to the continuous-time path of $X_t^{\pi}$ that we wish to simulate (for more details see example \cite{Karatzas1991}, \cite{Oksendal} and \cite{Watanabe1981}).  
\medskip

In the literature there are several discretization schemes available, the simplest approach is the Euler scheme. The Euler method is intuitive and easy to implement, and in our case we get the following discretize version
\begin{equation}\label{discrete}
	\hat{X}_{kh}^{\pi} = \hat{X}_{(k-1)h}^{\pi} + \left[c + \mu\pi \hat{X}_{(k-1)h}^{\pi} + r(1-\pi_t)\hat{X}_{(k-1)h}^{\pi} \right] \! h +  \sigma\pi \hat{X}_{(k-1)h}^{\pi}  Z_k  - \Delta Q_{kh}
\end{equation}
where the   $Z_k$  are   i.i.d. $N(0,h)$. 
\medskip 

\noindent Note that the risk process with investment  $X_t^{\pi}$ has  finite jumps on any finite interval $[0, T)$, and   in  absence of claims (or  between the jumps of $N_t$) is continuous and satisface the discretized version of the process
$$\hat{X}_{kh}^{\pi} = \hat{X}_{(k-1)h}^{\pi} + \left[c + \mu\pi \hat{X}_{(k-1)h}^{\pi} + r(1-\pi_t)\hat{X}_{(k-1)h}^{\pi} \right] h +  \sigma\pi \hat{X}_{(k-1)h}^{\pi}  Z_k$$
The jump size of the process $X_t^{\pi}$ at time $t$ is denoted by  $\Delta X_t^{\pi} = X^{\pi}_t - X^{\pi}_{t^-}$. The notation $X_{t^{-}}$  refers to $\lim_{s\to t^-}X_s$. Thus,  if the $n^{th}$ jump in the compound Poisson process occurs at time $t$ we have
$$X^{\pi}_{t} - X^{\pi}_{{t}^-} = - U_n$$
where  $U_n$ is the claim size at time $t=T_n$.  
\medskip

\noindent Taking in account previous remarks, an approach to stimulating a discretized version of $X_t^{\pi}$ on the interval $[0,T]$ is given by
\begin{enumerate}
	\item First simulate the arrival times in the compound Poisson process up to time  $T$.
	\item Use a pure diffusion discretization between the jump times.
	\item At the $n^{th}$  arrival time    $T_n$, simulate the $n^{th}$   claim size $U_n$ conditional on the value of the discretized process,  $\hat{X}_{T_n}^{\pi}$, immediately before $T_n$.
\end{enumerate}

\noindent We consider the case of a Weibull and Pareto distribution for  the claim size (severity) $U$ with $U\sim Weibull(1,50)$ and $U\sim Pareto(25,2)$,  frequency  parameter  $\lambda = 1$, premium rate  $c = 65$ and safety  loading factor $\rho=30\%$.  Parameters to  the optimal investment are: risk-free rate   $r=8.4\times 10^{-4}$, expected return  $\mu=10^{-3}$, volatility $\sigma^{2}=10^{-3}$ and risk aversion   $\alpha=0.2$. In our  simulation we have around of 0.48\% monthly excess return and Merton ratio  $\pi=0.8$. The choice of the parameters is purely academic and  $T=1$ year. 

\begin{figure}[h]
	\centering
	\caption{Ruin probability with claim size having Weibull and Pareto distribution: without investment (solid line) and with optimal investment (dashed line).}
	\label{ruin.graph}
	\medskip
	\includegraphics[width=13.3cm]{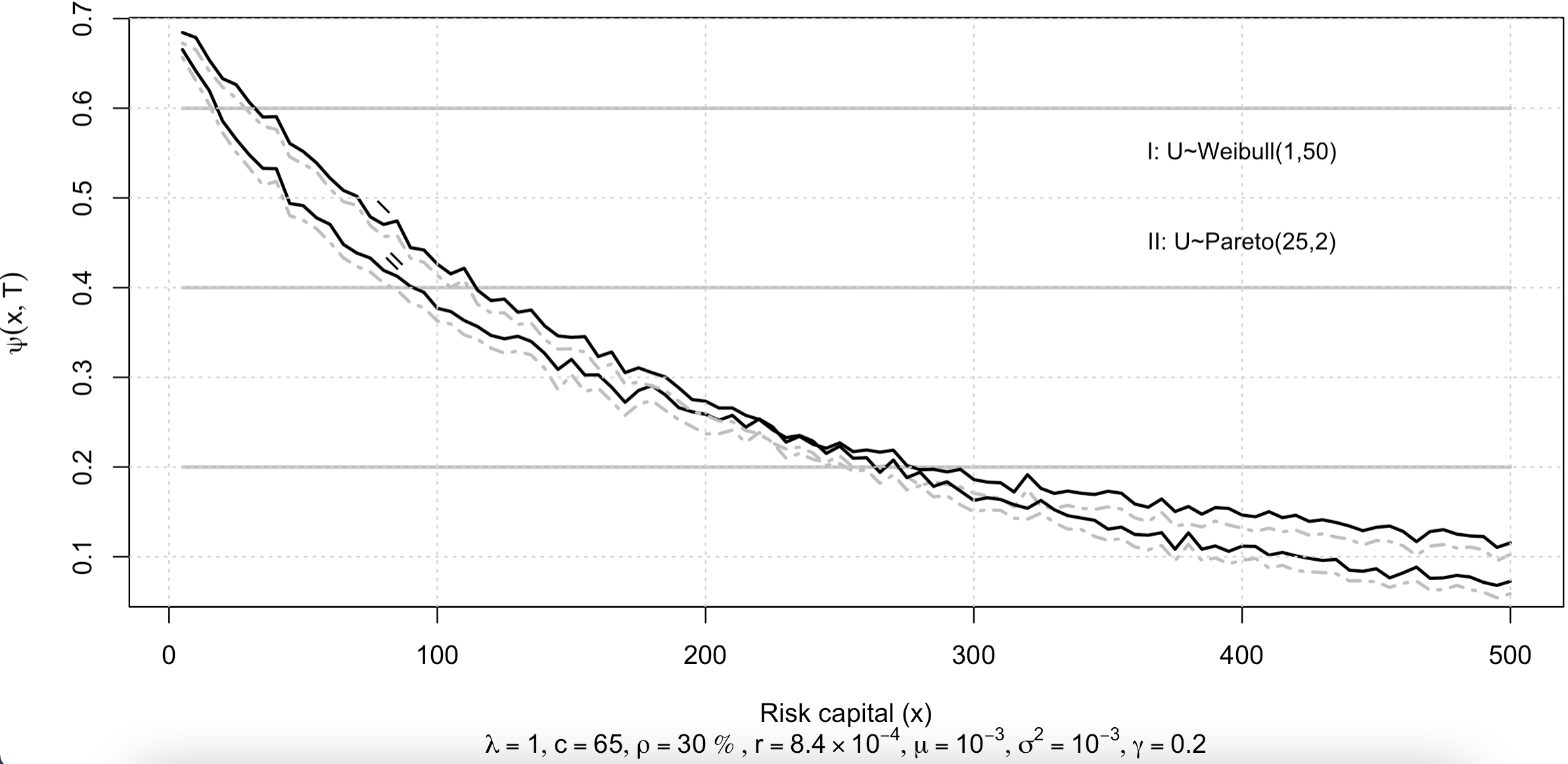}	
\end{figure}

\noindent Using the discretized version of the process  \ref{discrete},  we simulate  $n=10000$  trajectories for each case.  Numerical results,  in Figure  \ref{ruin.graph}, show that the  optimal strategy in one year reduce  the ruin probability of the portfolio for diferent claim size distributions. Solid line is the  ruin probability without investment and dashed line  is the ruin probability with investment. The difference in ruin probabilities seems to be small, but notice that, with Exponential claim size for capital risk $x=100$  we have a ruin probability $\hat{\psi}^{\pi}(x,T)$ of 0.4306 with optimal investment, while the ruin  probability is 0.4406 without investment $\hat{\psi}(x,T)$. In Table \ref{summary}, we summarize our results for Exponential, Pareto and Weibull distributions for  the claim size, and  different values of risk capital $x$.

\begin{table}
	\centering
	\caption{Ruin probability with claim size having Exponential, Pareto and  Weibull  distribution: without investment $\hat{\psi}(x,T)$ and with optimal investment $\hat{\psi}^{\pi}(x,T)$.}
	\label{summary}
	\medskip
	\begin{tabular}{|c|cc|cc|cc|}
		\hline 
		\multirow{2}{*}{$x$} & \multicolumn{2}{c|}{Exponential}      & \multicolumn{2}{c|}{Pareto}           & \multicolumn{2}{c|}{Weibull}         \\ \cline{2-7} 
		& \multicolumn{1}{c|}{$\hat{\psi}(x,T)$} & $\hat{\psi}^{\pi}(x,T)$  & \multicolumn{1}{c|}{$\hat{\psi}(x,T)$} & $\hat{\psi}^{\pi}(x,T)$   & \multicolumn{1}{c|}{$\hat{\psi}(x,T)$} & $\hat{\psi}^{\pi}(x,T)$  \\  \hline 
		100                & \multicolumn{1}{c|}{0.4406}  & 0.4372 & \multicolumn{1}{c|}{0.377}   & 0.3728 & \multicolumn{1}{c|}{0.4262}  & 0.424 \\ \hline
		200                & \multicolumn{1}{c|}{0.274}   & 0.2676 & \multicolumn{1}{c|}{0.2588}  & 0.247  & \multicolumn{1}{c|}{0.2734}  & 0.268 \\ \hline
		400                & \multicolumn{1}{c|}{0.1014}  & 0.097  & \multicolumn{1}{c|}{0.1464}  & 0.1418 & \multicolumn{1}{c|}{0.1118}  & 0.106 \\ \hline
	\end{tabular}
\end{table}

\subsection*{Acknowledgements.} 
JCH was supported by PROCIENCIA-CONCYTEC, contract  427-2019.  AR was supported by ANID-Chile (Fondecyt grant1231188)



\end{document}